\newcommand{\longversion}[1]{#1}
\newcommand{\shortversion}[1]{}
	\author{ }
	\date{May 2019}
\newdimen\prevdp
\def\leftlabel#1{\noalign{\prevdp=\prevdepth
   \kern-\prevdp\nointerlineskip\vbox to0pt{\vss\hbox{\ensuremath{#1}}}\kern\prevdp}}
\newcommand{\NP}{\ensuremath{\mathsf{NP}}\xspace}
\newcommand{\NPC}{\ensuremath{\mathsf{NP}}-complete\xspace}
\newcommand{\el}{\ensuremath{\ell}\xspace}
\newcommand{\suc}{\ensuremath{\succ}\xspace}
\newcommand{\XTC}{{\sc X3C}\xspace}
\newcommand{\Pb}{\ensuremath{\mathsf{P}}\xspace}
\newcommand{\YES}{{\sc yes}\xspace}
\newcommand{\NO}{{\sc no}\xspace}
\newcommand{\true}{{\sc true}\xspace}
\newcommand{\RGMB}{{\sc Reverse Gerrymandering Bribery}\xspace}
\newcommand{\GMB}{{\sc Gerrymandering Bribery}\xspace}
\newcommand{\MRGM}{{\sc Minimum Reverse Gerrymandering}\xspace}
\newcommand{\MGM}{{\sc Minimum Gerrymandering}\xspace}
\newcommand{\DCP}{{\sc \ensuremath{2}-Disjoint Connected Partitioning}\xspace}
\renewcommand{\AA}{\ensuremath{\mathcal A}\xspace}
\newcommand{\BB}{\ensuremath{\mathcal B}\xspace}
\newcommand{\DD}{\ensuremath{\mathcal D}\xspace}
\newcommand{\EE}{\ensuremath{\mathcal E}\xspace}
\newcommand{\GG}{\ensuremath{\mathcal G}\xspace}
\newcommand{\HH}{\ensuremath{\mathcal H}\xspace}
\newcommand{\II}{\ensuremath{\mathcal I}\xspace}
\newcommand{\LL}{\ensuremath{\mathcal L}\xspace}
\newcommand{\OO}{\ensuremath{\mathcal O}\xspace}
\newcommand{\PP}{\ensuremath{\mathcal P}\xspace}
\newcommand{\QQ}{\ensuremath{\mathcal Q}\xspace}
\newcommand{\RR}{\ensuremath{\mathcal R}\xspace}
\renewcommand{\SS}{\ensuremath{\mathcal S}\xspace}
\newcommand{\TT}{\ensuremath{\mathcal T}\xspace}
\newcommand{\UU}{\ensuremath{\mathcal U}\xspace}
\newcommand{\VV}{\ensuremath{\mathcal V}\xspace}
\newcommand{\WW}{\ensuremath{\mathcal W}\xspace}
\newcommand{\XX}{\ensuremath{\mathcal X}\xspace}
\newcommand{\ZZ}{\ensuremath{\mathcal Z}\xspace}
\newcommand{\NB}{\ensuremath{\mathbb N}\xspace}
\newcommand{\ZB}{\ensuremath{\mathbb Z}\xspace}
\newcommand{\RB}{\ensuremath{\mathbb R^+}\xspace}
\newtheorem{proposition}{\bf Proposition}
\newtheorem{theorem}{\bf Theorem}
\newtheorem{corollary}{\bf Corollary}
\newtheorem{definition}{\bf Definition}
\newtheorem{probdefinition}{\bf Problem Definition}
\newcommand{\eps}{\ensuremath{\varepsilon}\xspace}
\renewcommand{\epsilon}{\eps}
\newcommand{\ignore}[1]{}
\newcommand{\pr}{\ensuremath{\prime}}
\renewcommand{\ge}{\geqslant}
\renewcommand{\le}{\leqslant}
\crefname{theorem}{Theorem}{Theorems}
\crefname{observation}{Observation}{Observations}
\crefname{lemma}{Lemma}{Lemmata}
\crefname{corollary}{Corollary}{Corollaries}
\crefname{proposition}{Proposition}{Propositions}
\crefname{definition}{Definition}{Definitions}
\crefname{probdefinition}{Problem Definition}{Problem Definitions}
\crefname{claim}{Claim}{Claims}
\crefname{reductionrule}{Reduction rule}{Reduction rules}
\title{Gerrymandering: A Briber's Perspective}
\author{Palash Dey\\Email: palash.dey@cse.iitkgp.ac.in\\Indian Institute of Technology, Kharagpur}
\begin{document}

\maketitle

\begin{abstract}
	We initiate the study of bribery problem in the context of gerrymandering and reverse gerrymandering. In our most general problem, the input is a set of voters having votes over a set of alternatives, a graph on the voters, a partition of voters into connected districts, cost of every voter for changing her district, a budget for the briber, and a favorite alternative of the briber. The briber needs to compute if the given partition can be modified so that (i) the favorite alternative of the briber wins the resulting election, (ii) the modification is budget feasible, and (iii) every new district is connected. We study four natural variants of the above problem -- the graph on voter being arbitrary vs complete graph (corresponds to removing connectedness requirement for districts) and the cost of bribing every voter being uniform vs non-uniform. We show that all the four problems are \NPC even under quite restrictive scenarios. Hence our results show that district based elections are quite resistant under this new kind of electoral attack. We complement our hardness results with polynomial time algorithms in some other cases.
\end{abstract}

\section{Introduction}

A fundamental problem in multiagent systems is to aggregate preferences of a set of agents into a societal preference. Voting has served as one of the most important tool for this aggregation task in various applications~(see for example \cite{PennockHG00,DBLP:conf/fat/ChakrabortyPGGL19}). We assume that agents or voters express their preferences as a complete ranking over some set of alternatives. The plurality voting protocol is arguably the simplest and most widely used voting system where the winners are the set of alternatives who is the most preferred alternative of a maximum number of voters. In this paper we focus on district based election system. In such system, the voters are partitioned into districts. The winner of the election is the alternative which wins in the maximum number of districts. Indeed many real world election systems follow this model: the electoral college in US presidential elections, Indian political election, etc. are important examples of use of district based elections in practice.

However, a typical voting system can come under various kind of election control attacks --- a set of agents, either internal (e.g. voters) or external (e.g. briber), may be able to successfully swing the outcome of the election in their favor. We refer to \cite{DBLP:reference/choice/FaliszewskiR16} for an overview of common control attacks on voting systems. Bartholdi et al.~\cite{bartholdi1989computational} initiated the study of computational complexity of various election control problems and since then it has been one of the major research focus in computational social choice (see \cite[and references therein]{DBLP:reference/choice/ConitzerW16} for example).  Bartholdi et al. and Hemaspaandra et al.~\cite{DBLP:journals/ai/HemaspaandraHR07} studied computational complexity of an important control problem namely {\em "Control by Partitioning Voters into Two Districts."} This fundamental problem has recently been generalized along with two dimensions --- (i) the number of districts can be any integer $k$ which is given as input, (ii) there is a graph on the set of voters and every district required to be a connected subgraph of this graph. This problem is called {\em gerrymandering}~\cite{DBLP:conf/atal/LewenbergLR17,DBLP:conf/atal/Cohen-ZemachLR18}. Indeed there have been serious allegations that some political parties in US
effectively manipulated some elections in their favor through gerrymandering~\cite{erikson1972malapportionment,issacharoff2002gerrymandering}.

\tikzset{
	>=stealth',
	punkt/.style={
		rectangle,
		rounded corners,
		draw=black, very thick,
		text width=22em,
		minimum height=2em,
		text centered},
	pil/.style={
		->,
		thick,
		shorten <=2pt,
		shorten >=2pt,}
}

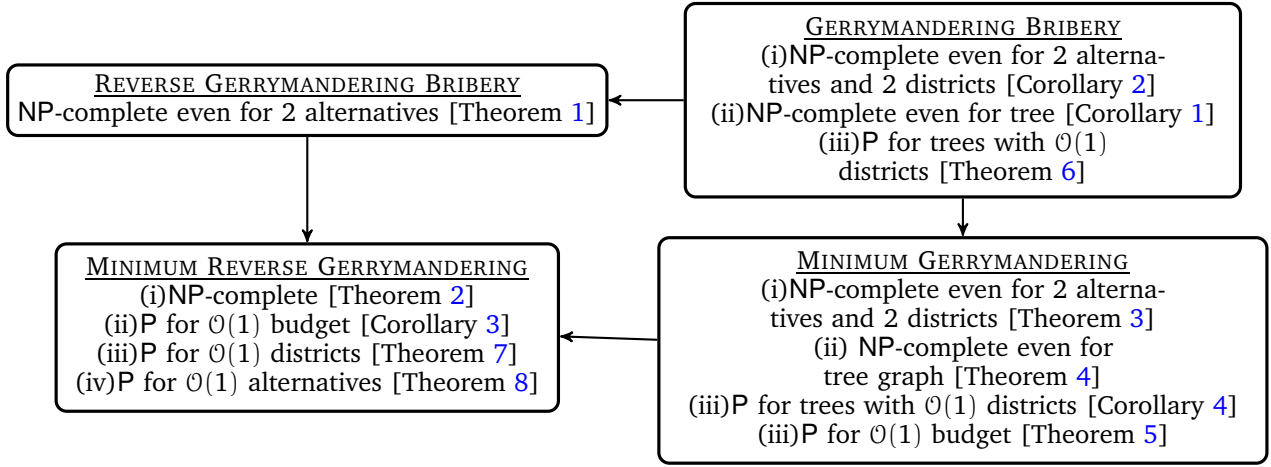
\begin{figure*}[!htbp]
	\centering
	\begin{tikzpicture}[node distance=1cm, auto]
	
	\node[punkt] (rgmb) {\underline{\RGMB}\\\NPC even for $2$ alternatives~[\Cref{thm:rgmb}]};
	
	\node[punkt, inner sep=5pt,below=1.4cm of rgmb,text width=18em]
	(mrgm) {\underline{\MRGM}\\(i)\NPC~[\Cref{thm:mrgm}]\\(ii)\Pb for $\OO(1)$ budget~[\Cref{cor:mrgm_algo_b_cons}]\\(iii)\Pb for $\OO(1)$ districts~[\Cref{thm:mrgm_algo_d_cons}]\\(iv)\Pb for $\OO(1)$ alternatives~[\Cref{thm:mrgm_alt_2}]};
	
	\node[punkt, inner sep=5pt,right=1cm of rgmb, text width=20em]
	(gmb) {\underline{\GMB}\\(i)\NPC even for $2$ alternatives and $2$ districts~[\Cref{cor:gmb}]\\(ii)\NPC even for tree~[\Cref{cor:gmb_2}]\\(iii)\Pb for trees with $\OO(1)$ districts~[\Cref{thm:gmb_algo_d_cons}]};
	
	\node[punkt, inner sep=5pt,below=.5cm of gmb]
	(mgm) {\underline{\MGM}\\(i)\NPC even for $2$ alternatives  and $2$ districts~[\Cref{thm:mgm_2}]\\(ii) \NPC even for tree graph~[\Cref{thm:mgm_tree}]\\(iii)\Pb for trees with $\OO(1)$ districts~[\Cref{cor:mgm_algo_d_cons}]\\(iii)\Pb for $\OO(1)$ budget~[\Cref{thm:mgm_algo_b_cons}]};
	
	\draw[->,thick] (gmb) edge (rgmb);
	\draw[->,thick] (rgmb) edge (mrgm);
	\draw[->,thick] (mgm) edge (mrgm);
	\draw[->,thick] (gmb) edge (mgm);
	\end{tikzpicture}
	\caption{Summary of results and complexity theoretic relationship among problems studied in paper. For two problems $X$ and $Y$, we right $X\rightarrow Y$ to denote that problem $Y$ many-to-one reduces to problem $X$ in polynomial time.}\label{fig:summary}
\end{figure*}

Lev and Lewenberg observed that, although districts being connected is a fundamental requirement for various district based election scenarios like political election, in some other applications, we may get rid of connectedness requirement~\cite{DBLP:conf/aaai/LevL19}. Examples of such applications include election within an organization, election performed over an online platform, etc. Lev and Lewenberg called this problem {\em reverse gerrymandering}.

\subsection{Motivation}

All the existing work on election control by voter partition and gerrymandering study the problem of designing a partition from scratch --- the input is a set of voters and one needs to find a partition favoring some alternative. However, in typical applications of this type of election control, district based political election for example, there already exists a partition of voters into districts and it may not be feasible for someone, let us call her a {\em briber}, to change the existing partition too much. In particular, even if there exists a partition \PP of the voters into districts where a favorite alternative of the briber wins the election, constructing that partition \PP from the existing partition \QQ may require changing the district of too many voters which makes \PP infeasible. Also the effort/cost required for moving a voter from one district to another may depend on the voter and the pair of districts involved. For some voters, it may be infeasible to change her current district. We incorporate these requirements into four computational problems and provide an extensive complexity landscape of these problems.

\subsection{Contribution}

In our most general problem, called \GMB, we are given a partition \PP of the voters, a graph \GG on the voters, a cost function $\pi$ which specifies the cost of moving any voter from a district to another, a favorite alternative $c$ of the briber, and a budget \BB for the briber. The briber needs to compute if there exists another partition \QQ of voters into connected districts which is budget feasible and makes $c$ a plurality winner in the maximum number of districts. We show that the \GMB problem is \NPC even if we have only $2$ alternatives and the graph on the voters is bipartite~[\Cref{cor:gmb}]. However, if the graph on the voters happens to be a tree and the number of districts is only some constant, then we show that there exists a polynomial time algorithm for the \GMB problem~[\Cref{thm:gmb_algo_d_cons}].

Motivated by the concept of reverse gerrymandering, we define and study the \RGMB problem which is the same as the \GMB problem except there is no graph on the voters and consequently the is no requirement for districts to be connected. It seems that existence of a graph on the voters may not be the main reason for \GMB to be intractable since we show that the \RGMB problem too is \NPC even if we have only $2$ alternatives.

We study both the \GMB and \RGMB problem under the assumption that the cost of every transfer is the same ($1$ without loss of generality). We call these problems \MGM and \MRGM respectively. These two problems also capture the robustness of a partition. A partition is be called {\em robust} if many voters need to change their current district to change the winner of the election. We show that the \MRGM problems is \NPC in general~[\Cref{thm:mrgm}] but polynomial time solvable if either briber's budget is a constant~[\Cref{cor:mrgm_algo_b_cons}] or the number of districts is a constant~[\Cref{thm:mrgm_algo_d_cons}] or we have a constant number of alternatives~[\Cref{thm:mrgm_alt_2}]. On the other hand, the \MGM problem turns out to be much harder: it is \NPC even if we have only $2$ alternatives and $2$ districts~[\Cref{thm:mgm_2}]. We also show \MGM is \NPC even if the graph is a tree~[\Cref{thm:mgm_tree}]. We summarize our contribution in this paper in \Cref{fig:summary}.

\subsection{Related Work}

Bartholdi et al.~\cite{bartholdi1989computational} are the first to study, among other election control, the computational problem of making a favorite candidate win by partitioning the voters into two districts. Subsequently, Hemaspaandra et al.~\cite{DBLP:journals/ai/HemaspaandraHR07} studied extensively both the constructive and destructive version of this problem under two tie breaking rule --- tie promoting (TP) and tie eliminating (TE). Lewenberg et al. introduced the gerrymandering problem and showed that gerrymandering is \NPC for election systems where voters in each district first elect a representative and the elected representatives ultimately choose an winner ~\cite{DBLP:conf/atal/LewenbergLR17}. Cohen-Zemach et al. showed that gerrymandering is \NPC for district based election system~\cite{DBLP:conf/atal/Cohen-ZemachLR18}. Ito et al. extensively studied algorithmic aspects of gerrymandering for different graph classes~\cite{DBLP:conf/atal/ItoK0O19}. Lev and Lewenberg studied iterated dynamics which reach to stable equilibrium in the context of reverse gerrymandering where voters change their districts driven by their self interest. Bribery is another well studied election control problem where an external agent, called briber, pays the voters to change/misreport their preference so that a preferred alternative of the briber wins the election. Depending on pricing model of the voters, various notions of bribery have been studied. Prominent models among these variants of bribery include \$bribery, shift bribery~\cite{faliszewski2006complexity,faliszewski2009hard,faliszewski2008nonuniform}, swap bribery~\cite{elkind2009swap}, etc. We refer to~\cite{DBLP:reference/choice/FaliszewskiR16} for an excellent overview of various bribery problems studied in computational social choice.

\section{Preliminaries and Problem Definitions}\label{sec:prelim}

\subsection{Voting Setting}

Let us denote the set $\{1, 2, \ldots, n\}$ by $[n]$ for any positive integer $n$. Let $\AA = \{a_1, a_2, \ldots, a_m\}$ be a set of alternatives and $\VV = \{v_1, v_2, \ldots, v_n\}$ a set of voters. If not mentioned otherwise, we denote the set of alternatives by \AA, the set of voters by \VV, the number of alternatives by $m$, and the number of voters by $n$. Every voter $v_i$ has a preference or vote $\suc_i$ which is a complete order over \AA. We denote the set of complete orders over \AA by $\LL(\AA)$. We call a tuple of $n$ preferences $(\suc_1, \suc_2, \cdots, \suc_n)\in\LL(\AA)^n$ an $n$-voter preference profile. For a preference $\suc\in\LL(\AA)$ and an integer k, we denote the profile consisting of k copies of \suc by $\suc^k$. A map $r:\cup_{n,|\mathcal{C}|\in\mathbb{N}^+}\mathcal{L(C)}^n\longrightarrow 2^\mathcal{C}\setminus\{\emptyset\}$
is called a \emph{voting rule}. One of the simplest and widely used voting rule is the plurality voting rule where the set of winners are the set of those alternatives who appear at the first position of the maximum number of voters' preferences. In the context of plurality voting rule, we say a voter votes for an alternative $x$ if the alternative $x$ is the most preferred alternative of that voter.

In this paper, we consider district based plurality elections only. In this setting, the set \VV of voters are partitioned into $k$ districts as $(P_i)_{i\in[k]}$. Let $W_i\subseteq\AA$ be the set of winners from the $i$-th district $P_i$ according to plurality voting rule. The set \WW of winners of the election is the set of those alternatives which is a winner from the maximum number of districts. An alternative $x\in\AA$ is said to be a {\em unique} winner of the election if $\WW=\{x\}$.

\subsection{Problem Definition}

We now define our problems formally.

\begin{probdefinition}[\GMB]\label{def:gmb}
	Given a set \AA of $m$ alternatives, a set \VV of $n$ voters, a graph \GG on \VV, a preference profile $\PP$ of \VV partitioned into $k$ districts as $(P_i)_{i\in[k]}$, cost functions $\pi:\VV\times[k]\longrightarrow\RB$ specifying cost of moving individual voters to various districts, the briber's budget \BB, and an alternative $c$, compute if it is possible for the briber to construct another partition $\QQ=(Q_i)_{i\in[k]}$ by spending at most \BB so that (i) $\GG[Q_i]$ is connected and (ii) $c$ is the unique plurality winner in the election $\cup_{i\in[k]}\WW_i$ where $\WW_i$ is the set of plurality winners in the $i$-th district $Q_i$. We denote an arbitrary instance of \GMB by $\left(\AA, \VV, \GG, (P_i)_{i\in[k]}, \pi,\BB,c\right)$.
\end{probdefinition}

In \Cref{def:gmb}, the briber wishes that her favorite alternative $c$ wins uniquely in a majority of the districts. Alternatively, the briber could as well wish that $c$ at least co-wins in a majority of the districts. It turns out that all our results (both hardness and algorithmic) extend easily to the co-winner setting. For ease of exposition, we consider only the unique winner setting (as defined in our problem definitions) here. We next define the \MGM problem which is the \GMB problem with the restriction that the cost of every transfer is the same (that is, the function $\pi$ is a constant function $1$).

\begin{probdefinition}[\MGM]\label{def:mgm}
	Given a set \AA of $m$ alternatives, a set \VV of $n$ voters, a graph \GG on \VV, a preference profile $\PP$ of \VV partitioned into $k$ districts as $(P_i)_{i\in[k]}$, the briber's budget \BB, and an alternative $c$, compute if it is possible for the briber to construct another partition $\QQ=(Q_i)_{i\in[k]}$ by moving at most \BB voters so that (i) $\GG[Q_i]$ is connected and (ii) $c$ is the unique plurality winner in the election $\cup_{i\in[k]}\WW_i$ where $\WW_i$ is the set of plurality winners in the $i$-th district $Q_i$. We denote an arbitrary instance of \GMB by $\left(\AA, \VV, \GG, (P_i)_{i\in[k]}, \pi,\BB,c\right)$.
\end{probdefinition}

We next define the \RGMB problem which is the same as the \GMB problem except there is no underlying graph over the voters and thus the districts need not to be connected.

\begin{probdefinition}[\RGMB]\label{def:gsb}
 Given a set \AA of $m$ alternatives, a preference profile $\PP$ of a set \VV of $n$ voters partitioned into $k$ districts as $(P_i)_{i\in[k]}$, cost functions $\pi:\VV\times[k]\longrightarrow\RB$ specifying cost of moving individual voters to various districts, the briber's budget \BB, and an alternative $c$, compute if it is possible for the briber to construct another partition $\QQ=(Q_i)_{i\in[k]}$ by spending at most \BB so that $c$ is the unique plurality winner in the election $\cup_{i\in[k]}\WW_i$ where $\WW_i$ is the set of plurality winners in the $i$-th district $Q_i$. We denote an arbitrary instance of \RGMB by $\left(\AA, (P_i)_{i\in[k]}, \pi,\BB,c\right)$.
\end{probdefinition}

We next define the \MRGM problem which is the \RGMB problem with the restriction that the cost of every transfer is the same (that is, the function $\pi$ is a constant function $1$). 

\begin{probdefinition}[\MRGM]\label{def:mrgm}
	Given a set \AA of $m$ alternatives, a preference profile $\PP$ of a set \VV of $n$ voters partitioned into $k$ districts as $(P_i)_{i\in[k]}$, the briber's budget \BB, and an alternative $c$, compute if it is possible to construct another partition $\QQ=(Q_i)_{i\in[k]}$ by moving at most \BB voters so that $c$ is the unique plurality winner in the election $\cup_{i\in[k]}\WW_i$ where $\WW_i$ is the set of plurality winners in the $i$-th district. We denote an arbitrary instance of \MRGM by $\left(\AA, (P_i)_{i\in[k]}, \BB,c\right)$.
\end{probdefinition}

In \Cref{prop:relation}, we establish complexity theoretic connections among the above problems. In the interest of space, we omit proof of some of our results (they are marked as $*$).

\begin{proposition}\label{prop:relation}\shortversion{[$\star$]}
	We have the following complexity theoretic relationship among our problems: (i) \RGMB reduces to \GMB, (ii) \MRGM reduces to \RGMB, (iii) \MRGM reduces to \MGM, (iv) \MGM reduces to \GMB.
\end{proposition}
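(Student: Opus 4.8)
The plan is to establish each of the four reductions by an essentially trivial "identity" or "padding" argument, since in each case the source problem is a syntactic special case of the target problem (possibly after inserting a trivial gadget). For (iv), \MGM reduces to \GMB: given an instance of \MGM, we output the \GMB instance with the very same alternatives, voters, graph, partition, budget, and favorite alternative, and we set the cost function $\pi$ to be the constant function $1$. Then spending at most \BB under $\pi\equiv 1$ is the same as moving at most \BB voters, and the connectedness and winning conditions are literally identical; so the two instances are equivalent. The reduction (ii), \MRGM reduces to \RGMB, is the same trick without the graph: set $\pi\equiv 1$ and keep everything else. Both of these are immediate from the problem definitions (\Cref{def:gmb}, \Cref{def:mgm}, \Cref{def:gsb}, \Cref{def:mrgm}) and run in linear time.

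For (i), \RGMB reduces to \GMB, and (iii), \MRGM reduces to \MGM, the only thing to supply is a graph \GG on the voter set that makes the connectedness constraint vacuous, i.e.\ that does not forbid any partition into $k$ parts. The natural choice is the complete graph $K_n$ on \VV: for any subset $Q_i\subseteq\VV$, the induced subgraph $K_n[Q_i]$ is itself a complete graph, hence connected (we may also note the trivial point that an empty part, if the formalism allows one, is connected as well, or alternatively observe that in the intended semantics every district is nonempty). Thus the feasible partitions \QQ of the constructed \GMB (resp.\ \MGM) instance are exactly the feasible partitions of the original \RGMB (resp.\ \MRGM) instance, and the budget/cost data and the plurality-winner condition are carried over verbatim. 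Hence the constructed instance is a \YES-instance iff the original one is, and the construction is clearly polynomial (indeed linear plus the $\binom{n}{2}$ edges of $K_n$).

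There is really no hard step here; the mild point to get right is the semantic convention about whether districts are required to be nonempty and, if not, that the complete graph still renders connectedness trivial — once that is pinned down, each of the four items is a one-line verification that the constructed instance and the original instance have exactly the same set of witnesses \QQ, with identical feasibility and winning conditions. I would present the four reductions together, factoring out the two observations "$\pi\equiv 1$ specializes cost-feasibility to cardinality-feasibility" and "$K_n$ makes the connectedness constraint vacuous," and then dispatch (i)--(iv) in a sentence apiece.
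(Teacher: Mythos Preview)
Your proposal is correct and matches the paper's own proof essentially line for line: the paper also uses the complete graph on \VV to make connectedness vacuous for (i) and (iii), and sets the transfer cost to the constant function $1$ for (ii) and (iv). The only addition in your write-up is the remark about empty districts, which the paper simply ignores.
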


\longversion{
\begin{proof}
	\begin{enumerate}[(i)]
		\item Given an instance of \RGMB, define the corresponding instance of \GMB with the same set of alternatives, partition, cost function, briber's budget, and favorite alternative as the \RGMB instance and the graph on voters is defined to be a complete graph. The equivalence of two instances is immediate.
		
		\item Given an instance of \MRGM, define the corresponding instance of \RGMB with the same set of alternatives, partition, briber's budget, and favorite alternative as the \RGMB instance and the cost any transfer is defined to be $1$. The equivalence of two instances is immediate.
		
		\item Given an instance of \MRGM, define the corresponding instance of \MGM with the same set of alternatives, partition, cost function, briber's budget, and favorite alternative as the \MRGM instance and the graph on voters is defined to be a complete graph. The equivalence of two instances is immediate.
		
		\item Given an instance of \MGM, define the corresponding instance of \GMB with the same set of alternatives, partition, briber's budget, and favorite alternative as the \GMB instance and the cost any transfer is defined to be $1$. The equivalence of two instances is immediate.
	\end{enumerate}
\end{proof}
}

\section{Results: Algorithmic Hardness}

We present our intractability results in this section. We begin with showing that the \RGMB problem is \NPC even if we have only $2$ alternatives. For that, we reduce from the \XTC problem which is well known to be \NPC (see~\cite{DBLP:books/fm/GareyJ79} for example). The \XTC problem is defined as follows.
\begin{probdefinition}[\XTC]
	Given a universe $\UU=\{u_i: i\in[3n]\}$ and a collection $\SS=\{S_j:j\in[m]\}$ of subsets of \UU each of cardinality $3$, compute if there exists a sub-collection $\TT\subseteq\SS$ of \SS such that (i) $\cup_{S\in\TT}S = \UU$ and (ii) $|\TT|=n$. We denote an arbitrary instance of \XTC by $\left(\UU=\left\{u_i:i\in[3n]\right\}, \SS=\left\{S_j: j\in[m]\right\}\right)$.
\end{probdefinition}

\begin{theorem}\label{thm:rgmb}
	The \RGMB problem is \NPC even if we have only $2$ alternatives.
\end{theorem}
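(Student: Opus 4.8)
The plan is to reduce from \XTC. Membership in \NP is immediate: a certificate is the target partition \QQ, and one checks in polynomial time that it is budget-feasible and that $c$ wins uniquely in a majority of districts. For hardness, start from an \XTC instance $\left(\UU=\{u_i:i\in[3n]\},\SS=\{S_j:j\in[m]\}\right)$ and build an \RGMB instance with alternative set $\AA=\{c,d\}$ (so $c$ is the briber's favorite). The idea is to have one ``element district'' $P_i$ for each $u_i\in\UU$ and one ``set district'' $R_j$ for each $S_j\in\SS$, plus possibly a few padding districts to control the arithmetic of ``majority of districts.'' In each element district $P_i$ I would put a small number of voters so that $d$ currently wins $P_i$ by exactly one vote, and populate each set district $R_j$ with enough ``spare'' $c$-voters that can be moved out cheaply. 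The cost function $\pi$ is set so that a $c$-voter in $R_j$ can be moved into district $P_i$ at cost $1$ if and only if $u_i\in S_j$, and moving it anywhere else (or moving any other voter) costs more than the budget \BB; set $\BB=3n$ (or a similarly tuned value) so that at most $3n$ cheap moves are possible.

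The key correspondence is: flipping element district $P_i$ from a $d$-win to a $c$-win requires importing at least one $c$-voter, and the only affordable source is a set district $R_j$ with $u_i\in S_j$. I would arrange the counts of set districts and padding districts so that $c$ becomes the unique overall winner precisely when \emph{all} $3n$ element districts are flipped to $c$ (e.g. $d$ otherwise ties or beats $c$ on district count). Flipping all $3n$ element districts within budget $3n$ forces: each of the $3n$ cheap moves flips a distinct element district, and the chosen source set districts, viewed as elements of \SS, cover \UU; moreover budget exactly $3n$ together with each $S_j$ having size $3$ means at most $n$ distinct set districts are used as sources, giving a sub-collection \TT with $|\TT|\le n$ covering all $3n$ elements, hence $|\TT|=n$ — an exact cover. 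Conversely, an exact cover \TT yields an obvious budget-$3n$ bribery: for each $S_j\in\TT$ and each $u_i\in S_j$, move one $c$-voter from $R_j$ into $P_i$, flipping every element district. One must also check that emptying three $c$-voters out of a source district $R_j$ does not disturb who wins $R_j$ (pad $R_j$ with enough permanent $c$-voters, or with $d$-voters, so its winner is fixed regardless), and that the padding districts' winners are fixed too; these are the routine bookkeeping details.

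The main obstacle is the arithmetic balancing act: I need the ``$c$ is unique plurality winner in a majority of districts'' condition to be equivalent to ``all $3n$ element districts flipped,'' while simultaneously ensuring the only budget-feasible moves are the intended $u_i\in S_j$ transfers, the source districts $R_j$ keep their winners after losing up to three voters, and no unintended flip (e.g. of a set or padding district) can be bought. Getting a single consistent choice of vote counts, number of padding districts, cost values, and budget \BB that makes all these constraints hold at once — and in particular forces the used set districts to number \emph{exactly} $n$ rather than fewer — is the delicate part; once the gadget parameters are fixed, both directions of the equivalence are straightforward.
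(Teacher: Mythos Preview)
Your reduction has a genuine gap in the reverse direction. You claim that ``budget exactly $3n$ together with each $S_j$ having size $3$ means at most $n$ distinct set districts are used as sources,'' but nothing in your construction enforces this. With budget $3n$ the briber makes $3n$ moves, one per element district, and each move draws one $c$-voter from \emph{some} $R_j$ with $u_i\in S_j$; there is no reason she cannot pull a single voter from each of up to $3n$ distinct $R_j$'s. The fact that $|S_j|=3$ bounds the moves \emph{per source} by $3$, which yields $|\TT|\ge n$, not $|\TT|\le n$ --- the inequality goes the wrong way. Worse, because you explicitly make each $R_j$'s winner robust to losing its movable voters (``emptying three $c$-voters out of a source district $R_j$ does not disturb who wins $R_j$''), the only effect of any budget-feasible bribery is on the element districts. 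Hence your \RGMB instance is \YES whenever every $u_i$ lies in at least one $S_j$, i.e., always, regardless of whether an exact cover exists.

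The paper's construction moves voters in the \emph{opposite} direction, and that is what makes the counting work. Each element district $\VV_u$ starts as a $c$-win with exactly one movable $c$-voter $v_u$; each set district $\VV_S$ starts as a $y$-win needing three additional $c$-voters to flip, and the only voters that can reach $\VV_S$ cheaply are precisely $\{v_u:u\in S\}$. Thus flipping $\VV_S$ forces all three of its designated voters to come. The padding is tuned so that $c$ wins overall only if at least $n$ set districts flip; since there are only $3n$ movable voters and each flip consumes three of them, at most $n$ can flip, hence exactly $n$ do, and since each $v_u$ goes to only one district the $n$ chosen sets are pairwise disjoint and cover $\UU$. The quota of three specific voters required at each \emph{target} is exactly the mechanism that upgrades ``cover'' to ``exact cover''; in your set-to-element direction no such quota is present, so the equivalence fails.
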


\begin{proof} 
	The \RGMB problem clearly belongs to \NP. To prove its \NP-hardness, we reduce from the \XTC problem. Let $\left(\UU=\left\{u_i:i\in[3n]\right\}, \SS=\left\{S_j: j\in[m]\right\}\right)$ be an arbitrary instance of \XTC. Without loss of generality, we assume that $5n>m+1$; if not, then we keep adding $3$ new elements in \UU and a set consisting of these $3$ new elements in \SS until $5n$ becomes more than $m+1$. We consider the following instance $\left(\AA, \PP, \pi,\BB,c\right)$ of \RGMB.
	\begin{align*}
		\AA &= \{c,y\}\\
		\PP &= \left((\VV_u)_{u\in\UU},(\VV_S)_{S\in\SS},(D_i)_{i\in[5n-m-1]}\right)\\
		\forall &u\in\UU,\VV_u : \left\{ 3 \text{ votes: } c\suc y\right.\\
			  &  \left.1 \text{ vote: } y\suc c\right\}\\
		\forall &S\in\SS, \VV_S : \left\{ 1 \text{ vote: } c\suc y\right.\\
		&  \left.3 \text{ votes: } y\suc c\right\}\\
		\forall &i\in[5n-m-1], D_i : \left\{ 1 \text{ vote: } c\suc y\right.\\
		&  \left.2 \text{ votes: } y\suc c\right\}
	\end{align*}
	We define the cost function $\pi$. For $u\in\UU$, let $v_u\in\VV_u$ be a voter in the district who votes for the alternative $c$. We define the cost of moving the voter $v_u$ to the district $\VV_S, S\in\SS,$ as $1$ if $u\in S$. The cost of any other movement of voters is $\infty$. Finally we define $\BB=3n$. This finishes the description of the reduced instance. We claim that the two instances are equivalent.
	
	In one direction, let us assume that the \XTC instance is a \YES instance. Without loss of generality, we may assume that $\{S_i:i\in[n]\}$ forms an exact set cover of \UU (by renaming). For every $u\in\UU$ and $i\in[n]$, if $u\in S_i$, then we move the voter $v_u$ to the district $\VV_{S_i}$. We see that the alternative $c$ wins uniquely in $m-n$ districts in the set $\{\VV_S: S\in\SS\}$ and every district in the set $\{\VV_u: u\in\UU\}$. Hence the alternative $c$ wins uniquely in $4n$ districts among $8n-1$ districts and thus the \RGMB instance is also a \YES instance.

	In the other direction, let the \RGMB instance is a \YES instance. Let $\QQ=\left((\VV_u^\pr)_{u\in\UU},(\VV_S^\pr)_{S\in\SS},(D_i^\pr)_{i\in[5n-m-1]}\right)$ be a partition of the voters into $8n-1$ districts such that (i) \QQ can be obtained from \PP by spending at most \BB according to the cost function $\pi$, and (ii) the alternative $c$ wins the election uniquely. We first observe that no voter can be moved to or moved from the district $D_i$ for every $i\in[5n-m-1]$ as the cost of any such movement is $\infty$. So, the alternative $y$ wins in the district $D_i^\pr$ for every $i\in[5n-m-1]$. Therefore, for the alternative $c$ to win the election uniquely, $c$ must win uniquely in at least $n$ districts among the districts in $\{\VV_S^\pr: S\in\SS\}$. We observe that, for the alternative $c$ to win uniquely in any district in $\{\VV_S^\pr: S\in\SS\}$, at least $3$ voters whose preference is $c\suc y$ must move to that district; these voters can only come from the districts in $\{\VV_u:u\in\UU\}$ and there are only $3n$ such voters who can be moved. Hence, there can be at most (and thus exactly) $n$ districts among the districts in $\{\VV_S^\pr: S\in\SS\}$ where the alternative $c$ wins uniquely in the partition \QQ. We claim that $\WW=\{S\in\SS:c \text{ wins uniquely in }\VV_S^\pr\}\subseteq\SS$ forms an exact set cover of \UU; we have already argued that $|\WW|=n$. Suppose \WW does not form an exact set cover for \UU, then there exists an element $z\in\UU$ which \WW does not cover. Since no voter from the district $\VV_z$ have moved, we have $|\WW|<n$ which is a contradiction. Hence the \XTC instance is also a \YES instance.
\end{proof}

In the proof of \Cref{thm:rgmb}, the cost of the movements are extremely non-uniform -- the cost of any movement is either $1$ or $\infty$. We next show that the problem remains \NPC even if all the movements have the same cost (which is the \MRGM problem). However, we need to make the number of alternatives unbounded unlike \Cref{thm:rgmb} where the number of alternatives was $2$. We again reduce from the \XTC problem to prove this result.

\begin{theorem}\label{thm:mrgm}
 The \MRGM problem is \NPC.
\end{theorem}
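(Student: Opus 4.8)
Membership in \NP is immediate (a target partition \QQ is a polynomial-size certificate whose budget feasibility and induced plurality winner are checkable in polynomial time), so the plan focuses on \NP-hardness. I would reduce from \XTC again, mimicking the proof of \Cref{thm:rgmb}, but the ``$\infty$-cost'' device used there to forbid unwanted voter movements is unavailable when every transfer costs the same. The plan is to replace it by a family of \emph{dedicated dummy alternatives}: one alternative $a_u$ for every element $u$ of the ground set (together with $c$ and, if convenient, a single global dummy). The point is that a voter whose top choice is $a_u$ is worthless for $c$ in every district, so moving such a voter into any district other than its ``home'' element district can never help $c$; this is exactly why the number of alternatives must be unbounded here, in contrast to \Cref{thm:rgmb}.

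Concretely, from an \XTC instance $\left(\UU=\{u_i:i\in[3n]\},\SS=\{S_j:j\in[m]\}\right)$ I would build an \MRGM instance as follows. (1) For each $u\in\UU$ create an \emph{element district} $E_u$ whose plurality counts for $c$ and $a_u$ are tuned so that $E_u$ sits on a knife-edge: it stays a $c$-win after exactly one ``intended'' transfer involving $u$, but $a_u$ overtakes $c$ after a second such transfer. (2) For each $S=\{u,v,w\}\in\SS$ create a \emph{set district} $T_S$ whose counts for $c,a_u,a_v,a_w$ are chosen so that the unique cheapest way to make $c$ the unique plurality winner of $T_S$ is the intended operation (moving the designated voters between $T_S$ and $E_u,E_v,E_w$), while every other way to flip $T_S$ is strictly more expensive or impossible. (3) Add enough \emph{filler districts}, each won by a dummy alternative by a margin larger than \BB, so those districts are inert for the briber. (4) Set \BB to the exact total cost of performing the intended operation on $n$ set districts, leaving the briber no slack, and pick the number of fillers so that ``$c$ wins all element districts and exactly $n$ set districts'' is a strict majority of all districts whereas ``$c$ wins fewer'' is not.

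For the forward direction, an exact cover $\{S_1,\dots,S_n\}$ gives the briber the strategy of performing the intended operation on $T_{S_1},\dots,T_{S_n}$; since the cover is exact every element district is touched exactly once, so all $3n$ element districts survive as $c$-wins and all $n$ chosen set districts become $c$-wins, a strict majority. For the converse, assume the briber succeeds within \BB. Using inertness of the fillers, the worthlessness of $a_u$-voters outside $E_u$, and the tightness of the budget, I would argue that the briber can only have performed the intended operation on exactly $n$ set districts, and that no element can be served by two of them (otherwise some element district is lost and the count of $c$-wins falls below a majority); hence the chosen sets partition \UU and the \XTC instance is a \YES instance.

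The main obstacle is step (2) together with verifying the converse: I must choose one assignment of vote counts that simultaneously blocks \emph{all} deviations available to a briber operating under a razor-thin budget --- consolidating $c$-voters into fewer set districts, shuffling voters directly between set districts, flipping a district ``halfway'', over-covering one element to free capacity for another, or poaching from a filler district --- while still admitting the intended solution. Once those counts are pinned down the equivalence is a routine case analysis, so essentially all the work lies in getting the gadget arithmetic right.
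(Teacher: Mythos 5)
Your overall strategy --- one dedicated alternative $a_u$ per element so that an $a_u$-voter is worthless to $c$ outside its designated district, inert filler districts won by margins exceeding the budget, and a razor-thin budget --- is the right core idea and matches what the paper does. But there is a genuine gap: everything hinges on your step (2), the choice of vote counts for the set districts $T_S$ and the knife-edge element districts $E_u$ that makes the ``intended operation'' the unique budget-feasible way to flip a set district while blocking consolidation, over-covering, half-flips, and direct shuffling between set districts --- and you never construct it. You acknowledge that ``essentially all the work lies in getting the gadget arithmetic right,'' but in your architecture this is not routine: a knife-edge $E_u$ that degrades to a tie after one intended transfer hands $a_u$ an extra district win (co-winning a district counts as winning it under the problem's winner notion), so the global comparison between $c$'s and each $a_u$'s district counts depends on exactly how many times each element is touched; and with uniform costs a briber can in principle trade a sacrificed element district for an extra flipped set district, so ruling out over-covering forces the filler counts, the $E_u$ margins, and the $T_S$ deficits to be balanced simultaneously. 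Until explicit counts are exhibited and the converse case analysis is carried out, the reduction is not established.

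The paper sidesteps all of this with a simpler architecture you could adopt. Make $c$ \emph{already a co-winner} of every set district: $P_S$ contains $\lambda$ voters for $c$ and $\lambda$ voters for each $a_u$ with $u\in S$, all tied. Add one reservoir district $P_C$ with $m-n+2$ voters for $c$, and for each $u$ add $m-1$ inert districts each with $m-n+2$ voters for $a_u$; set $\BB=m-n$. The only useful move is to import a single $c$-voter from $P_C$ into some $P_S$, breaking the tie in $c$'s favour there; the budget permits exactly $m-n$ such moves, $c$ then wins $m+1$ districts, and each $a_u$ (who already owns $m-1$ inert districts) stays at $m$ iff it retains at most one co-win among the $P_S$, i.e.\ iff the $n$ \emph{untouched} sets form an exact cover. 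No knife-edge districts, no voter flow between element and set gadgets, and no deviation analysis beyond the inertness of the fillers is needed. I recommend either completing your gadget with explicit counts and a full converse argument, or switching to this tie-breaking formulation.
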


\begin{proof}
 The \MRGM problem clearly belongs to \NP. To prove its \NP-hardness, we reduce from \XTC. Let $(\UU=\{u_i:i\in[3n]\},\SS=\{S_j: j\in[m]\})$ be an arbitrary instance of \XTC. We consider the following instance $\left(\AA, \PP,\BB,c\right)$ of \MRGM. Let $\lambda$ be any positive integer.
 \begin{align*}
  \AA &= \{a_u: u\in\UU\} \cup\{c\}\\
  \PP &= \left((P_S)_{S\in\SS}, P_C, (P_{u,i})_{u\in\UU,i\in[m-1]}\right)\\
  \forall &S\in\SS, P_S : \lambda \text{ copies of } a_u\suc\cdots, \forall u\in S,\\
                    & \lambda \text{ copies of } c\suc\cdots\lambda\\
  P_C &: m-n+2 \text{ copies of } c\suc\dots\\
  \forall &u\in\UU, i\in[m-1], P_{u,i} : m-n+2 \text{ copies of } a_u\suc\cdots\\
  \BB &= m-n
 \end{align*}
 We claim that the \XTC instance is equivalent to the \MRGM instance.
 
 In one direction, let us assume that the \XTC instance is a \YES instance. Let $\WW\subseteq\SS$ forms an exact set cover for \UU. Then we move one voter from the district $P_C$ to the district $P_S$ for every $S\in\SS\setminus\WW$. We see that, in the resulting partition, the alternative $c$ wins in $m+1$ districts where as every other alternative wins in exactly $m$ districts. This makes $c$ the unique winner in the resulting election. Thus the \MRGM instance is a \YES instance.
 
 On the other direction, let the \MRGM instance is a \YES instance. Let $\QQ=\left((P_S^\pr)_{S\in\SS}, P_C^\pr, (P_{u,i}^\pr)_{u\in\UU,i\in[m-1]}\right)$ be a new partition of the voters such that (i) \QQ can be obtained from \PP by moving at most \BB voters, and (ii) the alternative $c$ wins the election uniquely. We observe that, for every district in $P_C\cup\{P_{u,i}: u\in\UU,i\in[m-1]\}$, one needs to remove/add at least $m-n+2$ votes in order to change the current set of winners. Since the budget \BB is $m-n$, it follows that the winners of these districts do not change even after transferring \BB voters. The alternative $c$ can win from at most $m+1$ districts -- the district $P_C$ and the districts in $\{P_S:S\in\SS\}$. Hence for $c$ to win uniquely, every alternative $a_u, u\in\UU$, should win from at most $m$ districts -- the alternative $a_u$ already wins in every district in $\{P_{u,i}:i\in[m-1]\}$. Let us define $\XX=\{S\in\SS: \exists u\in S, a_u\text{ does not win in } P_S^\pr\}$. Since $\BB=m-n$, it follows that $|\XX|\le m-n$. We claim that $\WW=\SS\setminus\XX$ forms an exact set cover for \UU. We have $|\WW|=|\SS|-|\XX|\ge n$. However, if $|\WW|>n$, then there exists an element $u\in\UU$ such that $u$ belongs to at least $2$ sets in \WW and thus the alternative $a_u$ wins in at least $2$ districts in $\{P_S^\pr: S\in\SS\}$ --- this contradicts our assumption that $c$ wins uniquely in \QQ. Hence we have $|\WW|=n$. We claim that \WW forms an exact set cover for \UU. If not then, there exists an element $v\in\UU$ such that $u$ belongs to at least $2$ sets in \WW and thus the alternative $a_v$ wins in at least $2$ districts in $\{P_S^\pr: S\in\SS\}$ --- this contradicts our assumption that $c$ wins uniquely in \QQ. Hence \WW forms an exact set cover for \UU and thus the \XTC instance is a \YES instance.
\end{proof}

Due to \Cref{prop:relation}, we immediately conclude from \Cref{thm:mrgm} that the \MGM problem is also \NPC. However, we will show a much stronger result --- the \MGM problem is \NPC even if we have only $2$ alternatives and $2$ districts. To prove this result, we reduce from the \DCP problem.

\begin{definition}[\DCP]
	Given a connected graph $\GG=(\VV,\EE)$ and two disjoint nonempty sets $\ZZ_1,\ZZ_2\subset\VV$, compute if there exists a partition $(\VV_1,\VV_2)$ of \VV such that (i) $\ZZ_1\subseteq\VV_1, \ZZ_2\subseteq\VV_2,$ and (ii) $\GG[\VV_1]$ and $\GG[\VV_2]$ are both connected. We denote an arbitrary instance of \DCP by $(\GG,\ZZ_1,\ZZ_2)$.
\end{definition}

It is already known that the \DCP problem is \NPC~\cite[Theorem 1]{DBLP:journals/tcs/HofPW09}.

\begin{theorem}\label{thm:mgm_2}
The \MGM problem is \NPC even if we simultaneously have only $2$ alternatives and $2$ districts.
\end{theorem}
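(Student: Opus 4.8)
Let me think about this. We have a partition of voters into 2 districts, we want to move at most B voters so that c wins uniquely. With 2 districts, "c wins uniquely in the max number of districts" — the set of overall winners W is the set of alternatives winning the max number of districts. For c to be the unique winner, we need c to win strictly more districts than any other alternative. With 2 districts total: either c wins both districts, or c wins exactly 1 district and the other alternative y wins 0 (tie in the other district — but then it's a tie there, so y doesn't win it either; but does c "win" a tied district? With 2 alternatives, a tie means both are plurality winners in that district). Hmm, let me re-read.

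"$\WW_i$ is the set of plurality winners in the $i$-th district" — so in a tied district with 2 alternatives, $\WW_i = \{c, y\}$. Then W (overall) = alternatives winning max districts. If c wins district 1 uniquely and district 2 is tied, then c is a winner in 2 districts, y is a winner in 1 district, so W = {c}, c is unique winner. If c wins district 1 uniquely and y wins district 2 uniquely, then W = {c, y}, not unique. So we need: c wins district 1 uniquely AND (c wins district 2 or district 2 is tied), i.e., c is a plurality winner (not necessarily unique) in district 2, OR symmetric. Actually simplest sufficient condition: c wins both districts (at least co-wins). Let's aim to set up the reduction so that in the yes-case c wins both districts strictly.

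**Reduction from \DCP.** Given $(\GG, \ZZ_1, \ZZ_2)$ with $\GG = (\VV, \EE)$ connected. The idea: the graph in the MGM instance will be essentially $\GG$ (perhaps with gadget vertices attached), the two starting districts $P_1, P_2$ will be chosen so that the ONLY way to reconnect them after moving few voters is via a 2-disjoint-connected-partition respecting $\ZZ_1, \ZZ_2$. The subtlety: in MGM the two new districts $Q_1, Q_2$ must both induce connected subgraphs AND together partition $\VV$, which is exactly the \DCP condition — except \DCP does not bound the symmetric difference from $(P_1, P_2)$, whereas MGM does (budget $\BB$). So I'll set $\BB$ large enough (e.g. $\BB = n$, the number of voters) that the budget is not really a constraint, OR, to be safe, set $\BB$ to a value that allows any repartition. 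Actually $\BB = |\VV|$ trivially allows reaching any partition. So the budget constraint is vacuous and MGM with the winning condition becomes: does there exist a partition of $\VV$ into two connected parts $Q_1, Q_2$ with c winning (uniquely, overall)? Now I need to encode $\ZZ_1 \subseteq Q_1$, $\ZZ_2 \subseteq Q_2$ and the plurality/winning condition via the votes.

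**Encoding $\ZZ_1, \ZZ_2$ and the winning condition via votes.** Put the voters of $\VV$ into the graph as is. Now I want to design votes and attach pendant gadgets so that: (a) any valid solution must put all of $\ZZ_1$ in one part and all of $\ZZ_2$ in the other — achieve this by attaching, to each vertex $z \in \ZZ_1$, a large pendant clique/cluster of voters all voting $c \succ y$, and to each $z \in \ZZ_2$ a large pendant cluster of voters voting $y \succ c$ — hmm, but that doesn't force separation by itself. Alternative: make $\ZZ_1$ and $\ZZ_2$ each a single "super vertex" that cannot be split (since they're required subsets) — but they need to be connected within their part. Let me reconsider: in \DCP, $\ZZ_1$ and $\ZZ_2$ need NOT be connected on their own; $\GG[\VV_1]$ must be connected and contain $\ZZ_1$. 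So forcing "all of $\ZZ_1$ in $Q_1$" is the real content. To force $\ZZ_1 \subseteq Q_1$: give the vertices in $\ZZ_1$ and a distinguished extra vertex, and rig votes so that $Q_1$ is "the part containing a huge block of $c$-voters." Concretely: attach to each $z \in \ZZ_1$ a pendant path/cluster $H_z$ of many voters all voting $c \succ y$; attach to each $z' \in \ZZ_2$ a pendant cluster $H_{z'}$ of many voters voting $y \succ c$. Since $H_z$ is pendant at $z$, it must lie in the same part as $z$. If $\ZZ_1$ and $\ZZ_2$ ended up in the same part, that part has both huge $c$-blocks and huge $y$-blocks; then balance the global vote counts so that c wins overall only when the $c$-blocks and $y$-blocks are separated. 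The cleanest: make the total count so that whichever part contains ALL the $c$-blocks has c winning there (strictly), and we need the OTHER part (with all $y$-blocks) to be tied or c-won too — that's hard since it's full of $y$-voters. So instead: don't make it symmetric. Put all the "mass" on $\ZZ_1$'s side: attach huge $c$-clusters to $\ZZ_1$-vertices, attach nothing special to $\ZZ_2$-vertices, and arrange the original $\VV$-voters and a few gadget voters so that c wins district with the $c$-clusters easily, and c also (barely) wins the other district provided that other district does NOT also contain any $c$-cluster — wait, extra $c$-clusters would only help c. The obstruction must cut the other way.

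**The main obstacle, and how I'd resolve it.** The crux is forcing the separation $\ZZ_1|\ZZ_2$ using a \emph{two-alternative} vote structure, where "more $c$-voters" always helps $c$. The trick I would use: make it so that $c$ can win a district only if that district is \emph{small enough} (has few voters) — e.g., each district must end up with exactly the right size for $c$'s fixed number of $c$-votes to be a majority. Attach to every vertex of $\ZZ_1$ a pendant cluster of $c$-voters and to every vertex of $\ZZ_2$ a pendant cluster of \emph{more numerous} $y$-voters, sized so that: a part containing all $\ZZ_1$-clusters and none of the $\ZZ_2$-clusters has a $c$-majority; a part containing all $\ZZ_2$-clusters and none of the $\ZZ_1$-clusters is tied (equal numbers) or $c$-majority by a separate small boost of $c$-voters placed on $\ZZ_2$-side; but a part containing \emph{both} kinds of clusters is a $y$-landslide, losing. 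Then require c to win uniquely in the $\ZZ_1$-part and at least co-win in the $\ZZ_2$-part (engineering the $\ZZ_2$-part to be exactly tied when it contains all $\ZZ_2$-clusters plus the residual $\VV$-voters — here I'd use a handful of extra dummy pendant voters with adjustable votes to fix parity, and rely on $5n > m+1$-style padding if needed). The forward direction is then immediate: a \DCP solution $(\VV_1, \VV_2)$ with $\ZZ_1 \subseteq \VV_1$ gives $Q_1 = \VV_1 \cup (\text{its clusters})$, $Q_2$ similarly; c wins $Q_1$ strictly and ties/wins $Q_2$, budget $\BB = |\VV|$ suffices. The converse: any valid $(Q_1, Q_2)$ has both connected and c overall-unique-winner; the vote engineering forces all $\ZZ_1$-clusters on one side and all $\ZZ_2$-clusters on the other (else a $y$-landslide district kills c), hence all of $\ZZ_1$ on one side and $\ZZ_2$ on the other; dropping the pendant clusters gives connected $\GG[\VV_1], \GG[\VV_2]$ with $\ZZ_1 \subseteq \VV_1, \ZZ_2 \subseteq \VV_2$ — a \DCP solution. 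Membership in \NP is clear (guess $\QQ$). The delicate bookkeeping — getting the three district-size regimes ($c$-majority / tie / $y$-landslide) to line up with integer vote counts and making the pendant clusters genuinely pendant so they cannot be split across the cut — is the step I expect to spend the most care on; everything else is routine.
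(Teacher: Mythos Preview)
Your reduction from \DCP is the right starting point (and is the same source problem the paper uses), but your decision to make the budget $\BB$ vacuous opens a fatal hole in the backward direction. With two alternatives, write $M$ for the global margin (\#\,$c$-votes minus \#\,$y$-votes); any $2$-partition has district margins $m_1+m_2=M$, and $c$ is the unique overall winner iff $m_1,m_2\ge 0$ with at least one strict, hence in particular $M\ge 1$. Your forward direction forces exactly this: the $\ZZ_1$-side has strictly positive margin and the $\ZZ_2$-side (via your ``boost'') nonnegative margin, so $M\ge 1$. But then take any leaf $c$-voter $w$ in one of your pendant $c$-clusters and set $Q_1=\{w\}$, $Q_2=$ everything else. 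Both parts are connected (removing a leaf from a connected graph keeps it connected), and the margins are $1$ and $M-1\ge 0$, so $c$ wins uniquely --- regardless of whether the \DCP instance is a \YES or \NO instance. This trivial partition is always available once the budget is vacuous and a $c$-voter leaf exists, so your backward implication fails. Put differently, your claimed ``$y$-landslide when both kinds of clusters are mixed'' is impossible simultaneously with $M\ge 1$: the part containing everything-but-one-leaf has margin $M-1\ge 0$, not a $y$-landslide. No amount of bookkeeping on cluster sizes fixes this; the obstruction is the margin-conservation identity itself.

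The paper's proof differs precisely here: it keeps the budget \emph{small} ($\BB=2n$, with $n=|\UU|$) and chooses the initial partition carefully, so that only partitions reachable within $2n$ moves are candidates. Long pendant paths $\DD,\DD'$ of length $\approx 10n$ are anchored at fixed vertices $z_1\in\ZZ_1$ and $z_2\in\ZZ_2$, and the bounded budget is meant to pin each long path to one side and let its vote tally dominate, thereby forcing $\ZZ_1$ and $\ZZ_2$ apart. The bounded budget is not incidental --- it is the mechanism intended to exclude the trivial single-leaf partitions that break your argument. If you want to salvage your approach, you must either abandon the vacuous-budget simplification or ensure that \emph{no} $c$-voter can be isolated as a singleton district with a connected complement (i.e., every $c$-voter is a cut vertex of the constructed graph), and then further rule out all short balanced prefixes; that is substantially harder than what you have sketched.
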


\begin{proof}
 The \MGM problem clearly belongs to \NP. To prove its \NP-hardness, we reduce from the \DCP problem. Let $(\GG^\pr=(\UU,\EE^\pr),\ZZ_1,\ZZ_2)$ be an arbitrary instance of \DCP. Let $z_1$ and $z_2$ be any arbitrary (fixed) vertices of $\ZZ_1$ and $\ZZ_2$ respectively. We consider the following instance $\left(\AA, \GG, \PP=\left(\HH_1,\HH_2\right),\BB,c\right)$ of \MGM.
 \begin{align*}
 \AA &= \{c,y\}\\
 \VV &= \{v_z: z\in\ZZ_1\cup\ZZ_2\}\\ 
 &\cup \{v_u, w_u: u\in\VV\setminus(\ZZ_1\cup\ZZ_2)\}\cup\DD\cup\DD^\pr,\\
 &\DD=\left\{d_i: i\in\left[10n+\left|\ZZ_2\right|\right]\right\},\\
 &\DD^\pr=\{d_i^\pr:i\in[10n+|\ZZ_1|+1]\}\\
 \EE &= \{\{v_a,v_b\}: \{a,b\}\in\EE^\pr\}\\
 &\cup \{\{v_u,w_u\}:u\in\VV\setminus(\ZZ_1\cup\ZZ_2)\}\\
 &\cup\{\{d_i,d_j\}:i,j\in\left[\left|\ZZ_2\right|\right],j=i+1\} \cup \{\{z_2,d_1\}\}\\
 &\cup\{\{d_i^\pr,d_j^\pr\}:i,j\in\left[2n+|\ZZ_1|+1\right],j=i+1\}\\ &\cup\{\{z_1,d_1^\pr\}\}\\
 \HH_2 &= \left\{d_i: i\in\left[\left|\ZZ_2\right|\right]\right\}\\
 \HH_1 &= \VV\setminus\HH_2\\
 \text{Vote}&\text{ of } v_u, u\in\ZZ_2: c\suc y\\
 \text{Vote}&\text{ of } v_u, u\in\VV\setminus\ZZ_2: y\suc c\\
 \text{Vote}&\text{ of } w_u, u\in\VV\setminus(\ZZ_1\cup\ZZ_2): c\suc y\\
 \text{Vote}&\text{ of } d_i, i\in[5n]: y\suc c\\
 \text{Vote}&\text{ of } d_i, 5n<i\le 10n+|\ZZ_2|: c\suc y\\
 \text{Vote}&\text{ of } d_i^\pr, i\in[5n]: y\suc c\\
 \text{Vote}&\text{ of } d_i^\pr, 5n<i\le 10n+|\ZZ_1|+1: c\suc y\\
 \BB&= 2n
 \end{align*}
 We claim that the \DCP instance is equivalent to the \MGM instance.
 
 In one direction, let us assume that the \DCP instance is a \YES instance. Let $(\VV_1,\VV_2)$ be a partition of \UU such that $\ZZ_1\subseteq\VV_1, \ZZ_2\subseteq\VV_2, \GG^\pr[\VV_1]$ and $\GG^\pr[\VV_2]$ are both connected. We consider the following new partition of the voters.
 \begin{align*}
  \text{Voters of } \HH_2^\pr &: \{v_u, w_u: u\in\VV_2\setminus\ZZ_2\}\\
  &\cup\{v_u: u\in\ZZ_2\}\cup\DD^\pr\\
  \text{ voters of } \HH_1^\pr &: \text{ others}
 \end{align*}
 Since $\GG^\pr[\VV_1^\pr]$, $\GG[\DD^\pr]$ is connected, and $\{z_1,d_1^\pr\}\in\EE[\GG]$ is connected, it follows that $\GG[\HH_1^\pr]$ is also connected. Similarly, since $\GG^\pr[\VV_2^\pr]$, $\GG[\DD]$ is connected, and $\{z_2,d_1\}\in\EE[\GG]$ is connected, it follows that $\GG[\HH_2^\pr]$ is also connected. Thus the \MGM instance is also a \YES instance.
 
 In the other direction, let us assume that the \MGM instance is a \YES instance. Let $(\HH_1^\pr, \HH_2^\pr)$ be a new partition formed from $(\HH_1,\HH_2)$ by moving at most \BB ($=2n$) voters such that the alternative $c$ is the unique winner in the new election and each new district remains connected. For the alternative $c$ to win the election uniquely, $c$ must at least co-win in both $\HH_1^\pr$ and $\HH_2^\pr$. Since every voter $d_i^\pr, i\in[n]$ vote for $y$, $\BB=2n$, and the alternative $c$ at least co-wins in $\DD^\pr$, it follows that $\DD^\pr\subseteq\HH_1^\pr$. We claim that every voter $v_u, u\in\ZZ_1$ must belong to $\HH_1^\pr$ since otherwise $c$ will not be a co-winner in $\HH_2^\pr$. Also every voter $v_u, u\in\ZZ_2$ must belong to $\HH_2^\pr$ since otherwise $c$ will not be a co-winner in $\HH_2^\pr$. Let us define $\VV_1=\{u\in\GG^\pr:v_u\in\HH_1^\pr\}$ and $\VV_2=\{u\in\GG^\pr:v_u\in\HH_2^\pr\}$. Since $\HH_1^\pr$ and $\HH_2^\pr$ are both connected, it follows that $\VV_1$ and $\VV_2$ are both connected. Since $(\VV_1,\VV_2)$ forms a partition of $\GG^\pr$, $\ZZ_i\subset\VV_1, \ZZ_2\subset\VV_2,$ it follows that the \DCP instance is also a \YES instance.
\end{proof}

\Cref{thm:mgm_2} immediately gives us the following.

\begin{corollary}\label{cor:gmb_2}
	The \GMB problem is \NPC even if we simultaneously have only $2$ alternatives and $2$ districts.
\end{corollary}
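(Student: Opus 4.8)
The plan is to invoke Proposition \ref{prop:relation}(iv), which states that \MGM reduces to \GMB in polynomial time, and to observe that this reduction preserves both the number of alternatives and the number of districts. Indeed, the reduction described in the proof of Proposition \ref{prop:relation} keeps the set of alternatives, the partition (hence the number of districts $k$), the budget, and the favorite alternative exactly as in the \MGM instance, and merely sets the cost function $\pi$ to be the constant $1$. Therefore, if we start from an \MGM instance with only $2$ alternatives and $2$ districts, the resulting \GMB instance also has only $2$ alternatives and $2$ districts, and the two instances are equivalent.

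First I would state that \GMB is in \NP, which is immediate (a certificate is the new partition \QQ together with a witness that each $\GG[Q_i]$ is connected; one checks budget-feasibility, connectedness, and that $c$ is the unique plurality winner in polynomial time). Then, for \NP-hardness, I would take an arbitrary \MGM instance $\left(\AA, \GG, (P_i)_{i\in[2]}, \BB, c\right)$ with $|\AA| = 2$ produced by Theorem \ref{thm:mgm_2}, and map it to the \GMB instance $\left(\AA, \VV, \GG, (P_i)_{i\in[2]}, \pi, \BB, c\right)$ where $\pi \equiv 1$. By Proposition \ref{prop:relation}(iv) the two instances are equivalent: spending at most \BB under $\pi\equiv 1$ is exactly the same as moving at most \BB voters. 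Since Theorem \ref{thm:mgm_2} shows \MGM is \NPC under the restriction to $2$ alternatives and $2$ districts, and the restriction carries over verbatim, we conclude that \GMB is \NPC even with only $2$ alternatives and $2$ districts.

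There is essentially no obstacle here; the only thing to be careful about is to confirm that the reduction of Proposition \ref{prop:relation}(iv) is parameter-preserving in the two quantities of interest (number of alternatives and number of districts), which it manifestly is since it touches neither \AA nor the partition. So the proof is a one-line consequence: \emph{Follows from Theorem \ref{thm:mgm_2} and Proposition \ref{prop:relation}(iv), noting that the stated reduction changes neither the alternative set nor the number of districts.}
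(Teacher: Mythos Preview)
Your proposal is correct and matches the paper's own treatment: the paper simply states that the corollary follows immediately from Theorem~\ref{thm:mgm_2}, and your argument (applying the parameter-preserving reduction of Proposition~\ref{prop:relation}(iv) by setting $\pi\equiv 1$) is exactly the intended one-line justification.
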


The graph on the voters in the proof of \Cref{thm:mgm_2} can be arbitrarily complex. A natural question would be what is the complexity of the \MGM problem for simple graph classes. 
We show next that \MGM problem is \NPC even if the graph on the set of voters is a tree.
\begin{theorem}\label{thm:mgm_tree}\shortversion{[$\star$]}
	The \MGM problem is \NPC even if the underlying graph is a tree.
\end{theorem}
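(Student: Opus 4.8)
The plan is to reduce from \XTC, just as in the proofs of \Cref{thm:rgmb} and \Cref{thm:mrgm}, but this time encoding the set system by a tree together with the uniform moving budget rather than by a strongly non-uniform cost function. Membership in \NP is immediate: a certificate is the target partition $\QQ$, and one checks in polynomial time that at most $\BB$ voters changed district, that every $\GG[Q_i]$ is connected, and that $c$ is the unique plurality winner of $\cup_i\WW_i$.

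Given an \XTC instance $(\UU=\{u_i:i\in[3n]\},\SS=\{S_j:j\in[m]\})$ with $5n>m+1$ (achieved by padding, as in \Cref{thm:rgmb}), I would reuse the three kinds of districts of that proof, now realised as pairwise vertex-disjoint subtrees glued onto a single long ``backbone'' path: for each $u\in\UU$ an \emph{element district}, initially won by $c$, carrying one ``mobile'' $c$-voter placed at a leaf near the backbone; for each $S\in\SS$ a \emph{set district}, initially won by $y$, whose winner turns to $c$ exactly once three further $c$-voters have entered it; and $5n-m-1$ \emph{frozen} districts, each won by $y$ with a $c$-versus-$y$ gap exceeding $\BB$ and attached to the backbone so that neither its winner nor its vertex set can be altered within budget. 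Setting $\BB=3n$ reproduces the count of \Cref{thm:rgmb}: initially $c$ wins only the $3n$ element districts (so $c$ is not the unique winner), and $c$ becomes the unique winner precisely when $n$ additional set districts are flipped to $c$.

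The forward direction is routine: from an exact cover $\TT=\{S_{j_1},\dots,S_{j_n}\}$ I would, for each $u\in\UU$, move the mobile $c$-voter of the element district of $u$ into the set district of the unique $S\in\TT$ with $u\in S$; this uses exactly $3n$ unit moves, keeps every district a subtree, leaves all element districts won by $c$, flips exactly the $n$ set districts of $\TT$ to $c$, and leaves the frozen districts alone, so $c$ becomes the unique winner. For the converse, starting from any budget-feasible re-partition that makes $c$ the unique winner, I would argue in turn: the frozen districts and the backbone edges separating them cannot be touched within budget $\BB$, so all change is confined to the element districts, set districts, and intervening backbone segments; flipping a set district needs three fresh $c$-voters while only $3n$ exist, so exactly $n$ set districts are flipped and all $3n$ mobile voters are spent, three into each flipped district; and the structure then forces the flipped sets to cover each element of \UU exactly once.

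Making the tree genuinely enforce this last point is what I expect to be the main obstacle. The naive idea --- make the element district of $u$ adjacent to the set district of every $S\ni u$ --- fails, since the element--set incidence graph of \XTC is in general far from acyclic while our graph must be a tree; and if one only realises the small sub-incidence that embeds in a tree, \XTC restricted to such instances is solvable in polynomial time, so nothing is gained. The covering constraint therefore has to be transmitted through the \emph{connectedness} requirement on the new partition --- the role played by the connectedness conditions of \DCP in \Cref{thm:mgm_2} --- rather than through plain adjacency: the backbone, the attachment points of the element and set subtrees, and the internal wiring of the set gadgets must be arranged so that the only connected re-partitions of cost at most $3n$ that turn $c$ into the unique winner are precisely the ones coming from exact covers via the routing above. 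Ruling out the alternatives --- re-cuts that shift backbone voters, that split a subtree, or that feed a set district from the ``wrong'' element districts --- is the technical heart of the proof and the step demanding the most care.
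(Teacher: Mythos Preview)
Your plan has a genuine gap, and you have already put your finger on it: with only the two alternatives $\{c,y\}$, the information ``$u\in S$'' has to be carried by the tree, and you correctly note that the element--set incidence graph of an arbitrary \XTC instance is not a tree. Your fallback --- have connectedness of the new districts do the work that adjacency cannot --- is stated only as an intention; you give no construction, and it is not clear that any tree layout of your element/set/frozen paths on a common backbone can force, within budget $3n$, that the three $c$-voters entering a flipped set district $\VV_S$ are exactly those coming from the element districts of the three $u\in S$. As written, the proposal is a sketch that stops precisely at the hard step.

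The paper sidesteps this obstacle by abandoning the two-alternative restriction. It uses one alternative $a_u$ per element $u\in\UU$ (so $3n+1$ alternatives), and encodes set membership in the \emph{votes} rather than in the tree: in the set district $X_S$ the four alternatives $c$ and $a_u$ for $u\in S$ are tied, and the first seven voters along the path $X_S$ are arranged $c,a_x,a_x,a_y,a_y,a_z,a_z$ for $S=\{x,y,z\}$. Peeling those seven voters off into an adjacent dummy district leaves $c$ the unique winner of $X_S$ and, crucially, removes one win from each $a_u$ with $u\in S$. The ``$c$ unique overall'' condition then becomes ``every $a_u$ loses at least one of its $X_S$-wins,'' which is exactly a set-cover condition, and the budget $7n$ pins the cover size to $n$. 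The tree itself is almost structureless --- paths glued at one endpoint --- because the combinatorics live in the alternative set, not in the graph. That is the idea your two-alternative plan is missing.
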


\longversion{
\begin{proof}
	The \MGM problem clearly belongs to \NP. To show \NP-hardness we reduce from \XTC. Let $(\UU=\{u_i: i\in[3n]\}, \SS=\{S_j: j\in[m]\})$ be an arbitrary instance of \XTC. We consider the following instance $(\AA,\GG,\PP,\BB,c)$ of \MGM. For an element $u\in\UU$, let $f_u$ be the number of sets in \SS which contains $u$.
	\begin{align*}
	 \AA &= \{a_u: u\in\UU\}\cup\{c\}\\
	 \PP &= ((X_S)_{S\in\SS},(Y_{u,i})_{u\in\UU, i\in[m-f_u]})\\
	 \forall & S\in\SS, X_S: 10n  \text{ copies each of } c\suc \cdots, a_u\suc\cdots, u\in S\\
	 \forall & u\in\UU, i\in[m-f_u], Y_{u,i}: 10n \text{ copies of } a_u\suc\cdots\\
	 \BB &= 7n
	\end{align*}
	In the graph \GG on the voters, the induced graph on each district $X_S, S\in\SS, Y_{u,i}, u\in\UU, i\in[m-f_u]$ forms a path. Also the induced graph on $\cup_{u\in\UU}\cup_{i=1}^{m-f_u} Y_{u,i}$ forms a path; one end point of this path be a district $Y_{w,1}^1$ for some $w\in\UU$ with the end vertex (voter) be $Y_{w,1}^1$. In every district $X_S, S=\{x,y,z\}\in\SS$, at one end point of the path, call $X_S^1$, the voter $X_S^1$ vote for $c$ and the other $6$ vertices (voters) closest to $X_S^1$ in the district $X_S$ respectively for $a_x, a_x, a_y, a_y, a_z,$ and $a_z$ respectively. We observe that \GG is a tree. We now claim that the two instances are equivalent.
	
	In one direction, let us assume that the \XTC instance is a \YES instance. Let $\WW\subseteq\SS, |\WW|=n$ forms an exact cover for \UU. For every $S\in\WW$, we move $7$ voters from the district $X_S^1$ to the district $Y_{w,1}$; recall that $2$ of these $7$ voters in $X_S^1$ at the boundary with $Y_{w,1}$ vote for $a_x$ for $x\in S$ and one voter from these $7$ voters vote for $c$. We observe that, in the resulting district $X_S, S\in\WW$, the alternative $c$ wins uniquely. Since \WW forms an exact cover for \UU, it follows that every alternative $a_u, u\in\UU$ wins in $m-1$ districts whereas $c$ wins in $m$ districts. Hence $c$ is the unique winner of the resulting district. Since $|\WW|=n$, we have moved $7n (=\BB)$ voters. Hence the \MGM instance is a \YES instance.
	
	In other direction, let us assume that the \MGM instance is a \YES instance. Let the new districts be $\PP^\pr=((X_S^\pr)_{S\in\SS},(Y_{u,i}^\pr)_{u\in\UU, i\in[m-f_u]})$ where the alternative $c$ wins uniquely. Since $\BB=7n$ and we need to move at least $10n$ voters to change the winner of any district in $\{Y_{u,i}:u\in\UU,i\in[m-f_u]\}$, the winner of $Y_{u,i}$ is the same as $Y_{u,i}^\pr$ for every $u\in\UU, i\in[m-f_u]$. Hence $c$ wins in at most $m$ districts, namely the districts in $\{X_S^\pr: S\in\SS\}$. We define $\WW\subseteq\SS$ as follows: $\WW=\{S\in\SS: \text{ the voter } X_S^1 \text{ belong to } Y_{w,1}\pr\}$. We observe that $|\WW|=n$ and \WW forms a set cover for \UU. Suppose not, then there exists an element $u\in\UU$ such that the alternative $a_u$ co-wins in $f_u$ districts among $\{X_S^\pr: S\in\SS\}$. Hence the alternative $a_u$ wins in $m$ districts in $\PP^\pr$. This contradicts our assumption that $c$ wins uniquely in $\PP^\pr$. Hence the \XTC instance is a \YES instance.
\end{proof}
}

\Cref{thm:mgm_tree} immediately gives us the following corollary because of \Cref{prop:relation}.

\begin{corollary}\label{cor:gmb}
	The \GMB problem is \NPC even if the underlying graph is a tree.
\end{corollary}

\section{Results: Algorithms}

We now exhibit few situations where some of our problems admit polynomial time algorithms. We begin with showing that the \MGM problem is polynomial time solvable if the budget \BB is a constant.

\begin{theorem}\label{thm:mgm_algo_b_cons}\shortversion{[$\star$]}
 The \MGM problem is polynomial time solvable if the budget $\BB$ is a constant.
\end{theorem}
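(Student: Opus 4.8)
The plan is to show that \MGM with a constant budget $\BB$ can be solved by brute force over the set of voters that get moved, combined with a careful enumeration of where they go that respects the connectivity constraint. Since $\BB = O(1)$, there are only $n^{O(\BB)} = n^{O(1)}$ many choices of the set $M \subseteq \VV$ of at most $\BB$ voters that change their district, and for each such $M$ only $k^{|M|} \le k^{\BB}$ ways to assign each moved voter a target district (where $k$ is the number of districts, which is part of the input but bounded by $n$). So altogether there are polynomially many candidate ``move profiles''. For each candidate move profile we obtain a concrete new partition $\QQ = (Q_i)_{i\in[k]}$, and we simply check (i) that each $\GG[Q_i]$ is connected and (ii) that $c$ is the unique plurality winner of $\cup_{i\in[k]}\WW_i$; both checks run in polynomial time. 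We answer \YES iff some candidate passes both checks.

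The key steps, in order: first I would observe that the only part of the partition that matters for correctness is the multiset of voters in each district, and that in any solution at most $\BB$ voters differ from $\PP$, so enumerating $(M,\text{target assignment})$ with $|M|\le\BB$ is exhaustive — every \YES-certificate is captured. Second, for a fixed candidate I would form $Q_i = (P_i \setminus M) \cup \{v \in M : v \text{ assigned to } i\}$ and verify connectivity of each $\GG[Q_i]$ by a linear-time graph search. Third, I would recompute the plurality winner set $\WW_i$ of each $Q_i$ — this only requires retallying the at most $2\BB$ districts touched by $M$, since the others are unchanged — and then check that $c$ strictly beats every other alternative in the count of districts won. Fourth, I would bound the running time: $n^{\BB}\cdot k^{\BB}$ candidates, each processed in $\mathrm{poly}(n,m)$ time, which is polynomial for constant $\BB$.

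I do not expect a serious obstacle here; the statement is essentially a brute-force observation and the argument is the standard ``the certificate is small, so enumerate it'' technique. The one point needing a little care is the connectivity check: moving a voter out of $P_i$ can disconnect $\GG[P_i\setminus M]$ even though the final $Q_i$ is connected, so connectivity must be tested on the final districts $Q_i$, not incrementally, and we must remember that a district may legitimately become empty or that $c$-votes may have to be shuffled to maintain connectivity — but since we enumerate the target district of every moved voter explicitly, the resulting $Q_i$ are fully determined and a direct check suffices. A secondary subtlety is that $k$ itself is unbounded; however, since each of the $\le\BB$ moved voters independently picks one of $k$ targets, the count $k^{\BB}$ is still polynomial, so this causes no trouble.
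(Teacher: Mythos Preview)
Your proposal is correct and takes essentially the same approach as the paper: enumerate the at most $\BB$ voters who move together with their target districts ($\OO((nk)^{\BB})$ guesses), then check connectivity and the winner condition for each guess. The paper's own proof is a terser version of exactly this brute-force argument, so there is nothing to add.
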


\longversion{
\begin{proof}
 We guess the $b$ voters who leave their current district. We also guess the new district of each of the voters that move. For each guess, we check whether (i) the new districts are connected and (ii) the preferred alternative $c$ wins which can clearly be done in polynomial time. Since there are only $\OO((nk)^b)$ such guesses, our algorithm runs in polynomial time when $b$ is some constant.
\end{proof}
}

\Cref{thm:mgm_algo_b_cons} immediately implies polynomial time algorithm for the \MRGM problem under same setting.

\begin{corollary}\label{cor:mrgm_algo_b_cons}
	The \MRGM problem is polynomial time solvable if the budget $\BB$ is a constant.
\end{corollary}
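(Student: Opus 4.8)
The plan is to obtain this immediately from \Cref{thm:mgm_algo_b_cons} by invoking the reduction of \Cref{prop:relation}(iii). Recall that this reduction transforms an \MRGM instance into an \MGM instance by keeping the same set of alternatives, the same partition into districts, the same budget \BB, and the same favorite alternative, and declaring the graph on the voters to be the complete graph (so that the connectivity constraint in \MGM is vacuous). The key point is that the reduction is parameter-preserving in the budget: the budget of the produced \MGM instance is exactly the budget of the original \MRGM instance. Therefore an \MRGM instance whose budget \BB is a constant is mapped, in polynomial time, to an \MGM instance whose budget is the same constant, and \Cref{thm:mgm_algo_b_cons} decides the latter in polynomial time. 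Composing the polynomial-time reduction with the polynomial-time algorithm yields a polynomial-time algorithm for \MRGM with constant budget.

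There is essentially no obstacle here: the only thing to check is that \Cref{prop:relation}(iii) leaves \BB untouched, which is immediate from how that reduction is defined. For completeness one could also give a self-contained argument that mirrors the proof of \Cref{thm:mgm_algo_b_cons}: guess the set of at most \BB voters who change their district together with the target district of each of them, which gives only $\OO((nk)^\BB)$ candidate moves; for each candidate, compute the resulting districts and check in polynomial time whether $c$ is the unique plurality winner across them. Since \MRGM has no underlying graph, no connectivity test is needed, so the per-guess check is trivial, and when \BB is a constant the whole enumeration runs in polynomial time.
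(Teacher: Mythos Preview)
Your proposal is correct and follows exactly the route the paper takes: the paper simply states that \Cref{thm:mgm_algo_b_cons} immediately implies the corollary, and your use of the budget-preserving reduction in \Cref{prop:relation}(iii) makes that implication explicit. The alternative self-contained enumeration argument you sketch is also fine and mirrors the proof of \Cref{thm:mgm_algo_b_cons}.
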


We next show that there is a polynomial time algorithm for the \GMB problem if we the number of districts is some constant and the graph on the voters happen to be a tree. The intuitive reason for having such an algorithm is that, for trees, any solution of \GMB can be expressed as a set of \el edges that cuts the districts.

\begin{theorem}\label{thm:gmb_algo_d_cons}\shortversion{[$\star$]}
 The \GMB problem is polynomial time solvable for trees if the number of districts is a constant.
\end{theorem}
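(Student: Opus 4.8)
The plan is to exploit the fact that, when the host graph $\GG$ is a tree, a partition of $\VV$ into $k$ connected nonempty parts corresponds bijectively to a choice of $k-1$ edges of $\GG$ to delete, so there are only $\binom{n-1}{k-1}=\OO(n^{k-1})$ candidate partitions, which is polynomially many for constant $k$. Indeed, if $\QQ=(Q_i)_{i\in[k]}$ has every $\GG[Q_i]$ connected, then each $\GG[Q_i]$ is a subtree, the parts together use $\sum_{i\in[k]}(|Q_i|-1)=n-k$ edges, and hence exactly $k-1$ edges of $\GG$ run between parts; deleting precisely those edges recovers the parts $Q_i$ as the connected components of $\GG$. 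Conversely, deleting any $k-1$ edges of a tree leaves exactly $k$ connected components. So the first step is to enumerate every set $F$ of $k-1$ edges of $\GG$ and form the (unordered) partition $\{C_1^F,\dots,C_k^F\}$ of $\VV$ into the connected components of $\GG-F$.

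The crucial observation is that the two requirements in the definition of \GMB decouple over these candidates. Whether $c$ is the unique plurality winner of $\cup_{i\in[k]}\WW_i$ depends only on the \emph{unordered} partition $\{C_1^F,\dots,C_k^F\}$: the set of plurality winners of a district is determined by the votes inside it and not by the district's index, so it suffices to compute, for each component, its set of plurality winners, tally for every alternative the number of components it (co\nobreakdash-)wins, and keep $F$ only when $c$'s tally strictly exceeds that of every other alternative. The budget requirement, in contrast, depends on how the components are matched to the indices $1,\dots,k$: assigning component $C$ the index $j$ costs $\sum_{v\in C\setminus P_j}\pi(v,j)$, since a voter is charged exactly when it ends up with a district index different from its original one. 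Hence, for each surviving $F$, the second step is to find the cheapest bijection between $\{C_1^F,\dots,C_k^F\}$ and $[k]$ — a minimum-cost perfect matching in a bipartite graph on $k+k$ vertices, which for constant $k$ may simply be done by trying all $k!$ bijections — and the algorithm accepts iff some $F$ passes the winner test and admits a labelling of cost at most $\BB$.

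Correctness is immediate from the bijection: a feasible $\QQ$ for the \GMB instance is exactly $\GG-F$ for the set $F$ of its between-part edges, this $F$ passes the winner test, and the index assignment that $\QQ$ uses has cost at most $\BB$; conversely, any $F$ accepted by the algorithm, together with its cheapest labelling, is a feasible partition. The running time is $\OO(n^{k-1})$ times a per-candidate cost that is polynomial (roughly $\OO(nm)$ for the winner tally and $\OO(k!\cdot nk)$ for the matching), hence polynomial overall for constant $k$.

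I do not expect a real obstacle here: the argument is essentially a bounded enumeration made possible by the tree structure. The point that needs the most care is the cost bookkeeping — the cost of realizing a given unordered partition is not intrinsic to the partition but depends on the index assignment (a voter pays only if its index changes), which is why a matching step is required rather than a direct cost readout. A minor detail is to restrict attention to nonempty parts so that the correspondence with $(k-1)$-edge subsets is exact, which is automatic because deleting edges from a tree never creates an empty component.
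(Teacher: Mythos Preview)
Your proof is correct and follows the same core idea as the paper's: enumerate the $\OO(n^{k-1})$ ways to cut $k-1$ edges of the tree, and for each resulting partition test the winner condition and the budget condition. The paper's own argument is terser --- it simply asserts that ``the number of partition[s] of $\VV$ into $d$ connected districts is only $\OO((n-1)^d)$'' and that for each one ``we check if \ldots\ the total cost of all the transfers is at most $\BB$'' --- whereas you make explicit a point the paper leaves implicit: the unordered components produced by an edge cut must still be assigned indices in $[k]$, and the transfer cost (via $\pi:\VV\times[k]\to\RB$) depends on that assignment, so one must optimise over the $k!$ labellings (or solve a bipartite matching). This is a genuine refinement of the paper's write-up rather than a different approach; for constant $k$ the extra $k!$ factor is harmless, which is presumably why the paper did not spell it out.
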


\longversion{
\begin{proof}
	Let $(\AA,\VV,\GG,\PP,\pi,\BB,c)$ be an arbitrary instance of \GMB where the graph \GG is a tree. Let $d$ be the number of districts which is assumed to be some constant. We observe that, since \GG is a tree, there exists exactly $d$ cut edges with respect to the partition \PP --- edges with end points on different districts. Also, the number of edges in \GG is $n-1$ since \GG is a tree. Hence, the number of partition of \VV into $d$ connected districts is only $\OO((n-1)^d)$ which is $n^{\OO(1)}$ since $d$ is a constant. For every partition, we check if (i) $c$ wins the resulting election uniquely, and (ii) the total cost of all the transfers is at most \BB (both can be done in polynomial time). If we can find such a partition where both the above conditions hold, we output \YES, otherwise we output \NO. Clearly, when the algorithm outputs \YES, it discovers a partition which can be obtained by spending at most \BB and makes $c$ win the election uniquely. On the other hand, when the algorithm outputs \NO, we know that there does not exist any partition which satisfies both the above requirement. Hence the algorithm is correct. 
\end{proof}
}

\Cref{thm:gmb_algo_d_cons} immediately implies a polynomial time algorithm for the \MGM problem in the same setting.

\begin{corollary}\label{cor:mgm_algo_d_cons}
The \MGM problem is polynomial time solvable for trees if the number of districts is a constant.
\end{corollary}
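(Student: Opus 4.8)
The plan is to obtain this as an immediate consequence of \Cref{thm:gmb_algo_d_cons} together with \Cref{prop:relation}(iv). First I would take an arbitrary instance $(\AA,\VV,\GG,(P_i)_{i\in[k]},\BB,c)$ of \MGM in which \GG is a tree and $k$ is a constant, and apply the reduction underlying \Cref{prop:relation}(iv): this produces the \GMB instance with the same alternatives, voters, graph, partition, budget, and favorite alternative, and with the cost function set to the constant $1$. The key observation is that this transformation alters neither the graph nor the number of districts, so the resulting \GMB instance lives on a tree and still has a constant number of districts. Hence \Cref{thm:gmb_algo_d_cons} solves it in polynomial time, and by the equivalence asserted in \Cref{prop:relation}(iv) this decides the original \MGM instance.

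Alternatively, I could give a direct, self-contained argument paralleling the proof of \Cref{thm:gmb_algo_d_cons}: since \GG is a tree with $n-1$ edges, every partition of \VV into $k$ connected districts arises by deleting a set of $k-1$ edges, so there are only $\binom{n-1}{k-1}=n^{\OO(1)}$ candidate partitions \QQ (as $k$ is constant). For each candidate I would check in polynomial time whether $c$ is the unique plurality winner of $\cup_{i\in[k]}\WW_i$ and whether \QQ differs from \PP on the district assignment of at most \BB voters; I would answer \YES exactly when some candidate passes both tests, and \NO otherwise. Correctness is immediate since the algorithm exhausts all feasible partitions.

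I do not anticipate a real obstacle here. The only point needing a sentence of justification is that the \MGM-to-\GMB reduction of \Cref{prop:relation}(iv) preserves the underlying graph and the number of districts --- which it does by construction --- and that the \MGM notion of budget feasibility (at most \BB voters change their district) is polynomial-time checkable by comparing \PP and \QQ voter by voter.
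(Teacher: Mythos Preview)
Your proposal is correct and follows exactly the paper's approach: the paper states the corollary as an immediate consequence of \Cref{thm:gmb_algo_d_cons} (implicitly via the \MGM-to-\GMB reduction of \Cref{prop:relation}(iv)), and your first paragraph spells out precisely that argument, including the needed observation that the reduction preserves the tree and the number of districts. Your alternative direct enumeration over $\binom{n-1}{k-1}$ edge-deletion partitions is just an inlined version of the proof of \Cref{thm:gmb_algo_d_cons} specialized to unit costs, so it is not a genuinely different route.
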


We now show that the \RGMB problem is polynomial time solvable if the number of districts is a constant.

\begin{theorem}\label{thm:mrgm_algo_d_cons}
 The \MRGM problem is polynomial time solvable if the number of districts is a constant.
\end{theorem}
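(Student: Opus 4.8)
The plan is to turn the feasibility question into an enumeration over polynomially many ``target profiles'', for each of which the residual optimisation decouples across alternatives and is solved in polynomial time. As a preliminary reduction, observe that under plurality only each voter's top alternative is relevant and that relocating a voter never changes her top alternative; hence the only thing that matters about a target partition $\QQ=(Q_i)_{i\in[k]}$ is, for each alternative $a$ and district $i$, the number $m_{i,a}$ of voters with top choice $a$ lying in $Q_i$. Writing $n_{i,a}$ for the analogous numbers of the input partition \PP, a standard transportation argument shows that the minimum number of voter relocations needed to turn $(n_{i,a})$ into a target $(m_{i,a})$, which must satisfy $\sum_i m_{i,a}=\sum_i n_{i,a}$ for every $a$, equals $\sum_a\sum_{i\in[k]}(n_{i,a}-m_{i,a})^+$: in district $i$ at most $\min(n_{i,a},m_{i,a})$ type-$a$ voters can stay, and the evicted surplus can always be routed to deficient districts without any voter moving twice. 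So the problem becomes: decide whether some count array $(m_{i,a})$ of cost at most \BB makes $c$ the unique overall plurality winner.

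Next I would make the winning condition combinatorial. Let $C_i\subseteq\AA$ be the set of alternatives attaining the maximum count in district $i$; the overall score of $a$ is $|\{i:a\in C_i\}|$, and $c$ is the unique winner iff, putting $T=\{i:c\in C_i\}$ and $t=|T|$, we have $t\ge 1$ and every $a\ne c$ lies in at most $t-1$ of the sets $C_i$. I would then enumerate a \emph{profile} of the target configuration consisting of: the set $T\subseteq[k]$; for each district $i$ the winning count $\mu_i\in\{0,1,\dots,n\}$; for each $i\notin T$ a designated alternative $h_i\ne c$ meant to attain $\mu_i$ in district $i$ (and $h_i:=c$ for $i\in T$); and, for each of the $O(k)$ ``relevant'' alternatives $R=\{c\}\cup\{h_i:i\in[k]\}$ and each district, a bit saying whether it co-wins that district (forced to ``co-win'' for $c$ on $T$ and ``lose'' for $c$ off $T$, and forced to ``co-win'' for $h_i$ in district $i$). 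Since $k=O(1)$ the number of profiles is $n^{O(k)}m^{O(k)}$, which is polynomial; a profile is discarded unless the co-win bits it prescribes already certify, via the criterion above, that $c$ strictly outscores every relevant alternative (non-relevant ones are handled below).

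For a fixed surviving profile, the residual optimisation --- choose $(m_{i,a})$ minimising $\sum_a\sum_i(n_{i,a}-m_{i,a})^+$ subject to $\sum_i m_{i,a}=\sum_i n_{i,a}$, to $m_{i,a}\le\mu_i$ for all $a,i$, to $m_{i,h_i}=\mu_i$ for all $i$, to the equalities $m_{i,a}=\mu_i$ or strict inequalities $m_{i,a}\le\mu_i-1$ dictated by the co-win bits for $a\in R$, and, for every non-relevant $a$, to $m_{i,a}\le\mu_i-1$ on some self-chosen set of at least $k-t+1$ districts --- decouples completely across alternatives, because the only inter-alternative coupling (everybody being bounded by the district maximum, and that maximum being attained) has been pinned down by the $\mu_i$ and $h_i$. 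Each alternative's subproblem involves only the $k=O(1)$ numbers $(m_{i,a})_{i\in[k]}$: for $a\in R$ it is a one-shot greedy (honour the forced values, then raise the free $m_{i,a}$ towards $\min(n_{i,a},\text{upper bound})$ to save relocations, then dump any leftover), and for a non-relevant $a$ one additionally tries each of the at most $2^k$ admissible ``suppressed'' district sets and keeps the cheapest. Adding the per-alternative minima and comparing with \BB decides the profile, and the algorithm outputs \YES iff some profile passes. Soundness is immediate: a passing profile's array attains $\mu_i$ in each district (so $\mu_i$ really is the maximum there) and the co-win constraints make $c$ the unique winner, while the cost bound certifies realisability with at most \BB moves. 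Completeness holds because any genuine solution induces one of the enumerated profiles --- take $T$ to be $c$'s co-winner set, $\mu_i$ the true district maxima, $h_i$ one true co-winner per district outside $T$, and the induced co-win bits and suppressed sets --- on which the computed optimum is at most the number of moves used.

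The step I expect to be the crux is the decoupling: one must justify that, once the district maxima $\mu_i$ and one attaining alternative $h_i$ per district are fixed, the possibly $\Theta(m)$ non-relevant alternatives can be treated independently, and that it suffices to force each of them strictly below the maximum in at least $k-t+1$ self-chosen districts rather than to track which particular districts each of them co-wins. A subsidiary point is that explicitly guessing every $\mu_i$ keeps the enumeration polynomial only because $k$ is constant, and that each small per-district subproblem must correctly charge the $(n_{i,a}-m_{i,a})^+$ cost, including the forced evictions incurred when $n_{i,a}>\mu_i$.
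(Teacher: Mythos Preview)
Your argument is correct. The transportation-cost identity, the polynomial bound on the number of profiles, the decoupling across alternatives once the district maxima $\mu_i$ and a designated attaining alternative $h_i$ are fixed, and the treatment of non-relevant alternatives via $2^k$ suppressed sets all go through for constant $k$; soundness and completeness are as you say.

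Compared with the paper's proof, you share the decisive idea---guessing the per-district winning score and the set of districts where $c$ co-wins---but you diverge in how the residual problem is solved. The paper, after guessing only $(w_i)_i$ and $\II$, runs an explicit greedy routine that moves voters into and out of a reservoir $\RR$ in several phases; correctness of that routine relies on arguing that each greedy choice is without loss of optimality. You instead enlarge the guess by a designated winner $h_i$ per district and co-win bits for the $O(k)$ relevant alternatives, paying an extra $m^{O(k)}2^{O(k^2)}$ factor in the enumeration; this buys a clean per-alternative decoupling in which each residual subproblem is a $k$-variable separable convex minimisation with box constraints and one equality, so no ad hoc greedy argument is needed. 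Your route is a bit heavier on enumeration but lighter on algorithmic reasoning, and it makes the soundness/completeness proof essentially mechanical.
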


\begin{proof}
 Let $(\AA,\PP=(P_i)_{i\in[\el]},\BB,c)$ be any instance of \MRGM. We present an algorithm check if there exists a partition $\QQ=(Q_i)_{i\in[\el]}$ of voters where the alternative $c$ wins uniquely and it can be obtained from \PP by transferring at most \BB voters. Let $d$ be the number of districts which is assumed to be a constant. We guess the plurality score $w_i, i\in[\el],$ of a winner in every district; there are $\OO(n^d)$ such possibilities. We also guess the subset $\II\subseteq[\el]$ of districts where $c$ is a winner; again there are $\OO(2^d)$ such possibilities. Suppose $|\II|=\lambda$ For these guesses, we execute the following steps. Throughout the steps below, we will decrease \BB sometimes; if \BB becomes negative anytime, then we discard this guess of $w_i, i\in[\el],$ and \II.
 \begin{enumerate}
  \item We initialize a set \RR to empty set.
  
  \item For every alternative $x\in\AA\setminus\{c\}$ and district $i\in[\el]$, if the plurality score $p(x;i)$ of $x$ in the district $i$ is strictly more than $w_i$, then move $p(x;i)-w_i$ voters who vote for the alternative $x$ and having smallest costs from the district $i$ to \RR and decrease \BB by $p(x;i)-w_i$. Throughout the algorithm we will decrease \BB sometimes; if \BB becomes negative anytime, then we discard this guess of $w_i, i\in[\el],$ and \II. For every district $i\in\II$, if the plurality score $p(c;i)$ of $c$ in the district $i$ is strictly more than $w_i$, then move $p(c;i)-w_i$ voters who vote for the alternative $c$ and having smallest costs from the district $i$ to \RR and decrease \BB by $p(c;i)-w_i$. For every district $i\in[\el]\setminus\II$, if the plurality score $p(c;i)$ of $c$ in the district $i$ is more than or equal to $w_i$, then move $p(c;i)-w_i+1$ voters who vote for the alternative $c$and having smallest costs from the district $i$ to \RR and decrease \BB by $p(c;i)-w_i+1$. After this step is executed for every alternative $x\in\AA$ and district $i\in[\el]$, the plurality score of any alternative $x\in\AA$ in any district $i\in[\el]$ is at most $w_i$.
  
  \item We run this step as long as we can. For every district $i\in\II$, if the plurality score of the alternative $c$ in the district $i$ is strictly less than $w_i$, then we do the following: 
  \begin{enumerate}
   \item If there exists a voter in \RR who vote for $c$, then we move that voter from \RR to the district $i$.
   
   \item Else if there exists a district $j\in[\el]\setminus\II$ which has a voter voting for $c$, then we move that voter voting for $c$ and having smallest cost among all voters voting for $c$ in the set $[\el]\setminus\II$ of districts from its current district to district $i$ and decrease \BB by $1$.
   
   \item Else we discard our current guess.
  \end{enumerate}

  \item We run this step as long as we can. For every alternative $x\in\RR$ and district $i\in[\el]$, if the plurality score $p(x;i)$ of $x$ in the district $i$ is at most $w_i-2$, then we move one voter voting for $x$ from \RR to the district $i$.
  
  \item If \RR contains any voter voting for $c$, then we discard our current guess (in this case, $\sum_{i\in\II}w_i + \sum_{i\in[\el]\setminus\II}(w_i-1)$ is strictly less than the sum of plurality scores of $c$ in all districts).
  
  \item We run this step until \RR is non-empty: for every voter $v$ in \RR voting for some alternative $x\in\AA$ and district $i\in[\el]$, if the plurality score $p(x;i)$ of $x$ in the district $i$ is less than $w_i$, then we move $v$ to the district $i$. If for some voter $v$, there does not exist any such district $i$, then we discard our current guess.
  
  \item Now \RR is an empty set. For every alternative $x\in\AA\setminus\{c\}$, if $x$ is a co-winner in at least $\lambda$ (recall $\lambda$ is the guessed number of districts where $c$ has highest plurality score) districts, let $v$ be a voter voting for the alternative $x$ and having smallest cost among all the voters voting for $x$ in all the districts where $x$ has the highest plurality score. Let $j\in[\el]$ be a district such that the plurality score of $x$ in district $j$ is at most $w_j-2$. We move the voter $v$ from his current district to district $j$ and decrease \RR by $1$. If no such district $j$ exists, then we discard our current guess.
 \end{enumerate}
 
 If there exists a guess which is not discarded by any of the above steps, then the algorithm outputs \YES. If all possible guesses are discarded by the above steps, then the algorithm outputs \NO. If the algorithm outputs \YES, it indeed constructs another partition of voters from the given partition by moving at most \BB voters where $c$ is the unique winner in the overall election. On the other hand, if the algorithm discards a guess, indeed there does not exist any partition which respects the guess and can be obtained from the given partition by moving at most \BB voters where $c$ wins. Hence, if every possible guess is discarded by some of the above steps, then the instance is indeed a \NO instance. This concludes the correctness of the algorithm. Since all the above steps can be executed in polynomial time, it follows that our algorithm runs is polynomial time if we have a constant number of districts.
\end{proof}

\begin{theorem}\label{thm:mrgm_alt_2}
	There exists a polynomial time algorithm for the \MRGM problem if the number of alternatives is some constant.
\end{theorem}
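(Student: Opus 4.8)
The plan is to strip \MRGM down to a purely combinatorial optimisation over integer count vectors and then solve that optimisation by a dynamic program that sweeps over the districts. Under plurality the only relevant feature of a voter is her top alternative, and relocating a voter never changes it; hence a district $Q_i$ is fully described, for our purposes, by its profile of counts $(n_{i,1},\dots,n_{i,m})$ of voters favouring $a_1,\dots,a_m$ respectively, and its plurality co-winners are exactly the $a_j$ attaining $\max_{\el}n_{i,\el}$. Let $N_j:=\sum_{i\in[k]}n_{i,j}$ be the total number of voters favouring $a_j$. A briber's solution then amounts to choosing target profiles $(q_{i,1},\dots,q_{i,m})_{i\in[k]}$ of nonnegative integers with $\sum_{i\in[k]}q_{i,j}=N_j$ for every $j$ (voters are only relocated, never created or destroyed), subject to: (a) the minimum number of relocations turning the current profile into one with these district compositions, which equals $\sum_{i\in[k]}\sum_{j\in[m]}\max(0,n_{i,j}-q_{i,j})$, is at most $\BB$; and (b) $c$ attains $\max_{\el}q_{i,\el}$ in strictly more of the indices $i$ than any alternative $\ne c$ does. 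That the displayed quantity is exactly the minimum relocation cost is a transportation argument: for each type $j$, the districts with $q_{i,j}<n_{i,j}$ must export their deficit and those with $q_{i,j}>n_{i,j}$ must import theirs, the two totals agree by conservation, and since no district is simultaneously an exporter and an importer of the same type, a routing exists that never sends a voter back to her own district and moves precisely this many voters.

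I would then test feasibility by a dynamic program over the districts $1,\dots,k$. A state consists of the index of the last processed district; the number of relocations used so far, an integer in $\{0,\dots,\min(\BB,n)\}$; the partial column sums $(s_1,\dots,s_m)$ with $s_j\in\{0,\dots,N_j\}$; and the partial co-win tallies $(w_1,\dots,w_m)$ with each $w_j\in\{0,\dots,k\}$. Since $m=\OO(1)$, the number of states is $n^{\OO(m)}$, hence polynomial. To extend a state across district $i+1$ I enumerate every vector $(q_{i+1,1},\dots,q_{i+1,m})\in\{0,\dots,n\}^m$ with $\sum_j q_{i+1,j}\le n$ — no other vector is achievable — of which there are at most $(n+1)^m$; such a choice increases the running cost by $\sum_j\max(0,n_{i+1,j}-q_{i+1,j})$, each $s_j$ by $q_{i+1,j}$, and $w_j$ by $1$ for every $j$ attaining $\max_{\el}q_{i+1,\el}$. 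After the last district the instance is a \YES instance iff some reachable state has $s_j=N_j$ for all $j$, cost at most $\BB$, and the tally of $c$ strictly larger than every other tally. Correctness is immediate from the reformulation, and the time bound is $k$ times the number of states times $(n+1)^m$, which is $n^{\OO(m)}$ — polynomial for fixed $m$.

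The delicate part, and the real content, is the reformulation itself: one must argue (i) that conservation of the per-type totals is the \emph{only} global constraint coupling the districts, so that once the target profiles are fixed one may forget which source district supplied each imported voter; and (ii) that the minimum relocation cost is exactly $\sum_{i,j}\max(0,n_{i,j}-q_{i,j})$, and not something larger because of an awkward interaction between districts — both of which reduce to the observation that, within a single type, exporters and importers are disjoint and the induced transportation instance is trivially feasible. One should also settle the convention for an empty final district (either disallow the zero vector in the enumeration, or declare such a district to have empty co-winner set) and note that ties cause no difficulty because co-winners are read off the count vectors directly. Granting these points, the dynamic program is just careful bookkeeping.
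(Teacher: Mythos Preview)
Your proposal is correct and takes essentially the same approach as the paper: a dynamic program that sweeps over the districts, whose state records the budget spent so far, the per-alternative tally of districts won so far, and a conservation term ensuring the per-type voter totals balance at the end. The only cosmetic difference is in that last coordinate --- the paper tracks a net supply/deficit vector $(\lambda_a)_{a\in\AA}$ with entries in $[-\BB,\BB]$, whereas you track the running partial sums $s_j\in[0,N_j]$ --- and correspondingly the paper's transition enumerates change vectors bounded by the budget while yours enumerates target count vectors bounded by $n$; both yield $n^{\OO(m)}$ states and $n^{\OO(m)}$ transitions. Your write-up is somewhat more careful than the paper's in making explicit the transportation argument for why $\sum_{i,j}\max(0,n_{i,j}-q_{i,j})$ is the exact relocation cost, which the paper leaves implicit.
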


\begin{proof}
 Let $(\AA,\PP=(\PP_i)_{i\in[\el]},\BB,c)$ be any instance of \MRGM. We present a dynamic programming based algorithm for a generalization of the \MRGM problem (strictly speaking, our new problem is not a generalization of \MRGM but it will be obvious that \MRGM polynomial time Turing reduces to our new problem for $m=\OO(1)$ as we explain later). In our new problem, other than the input of \MRGM, we are also given a ``supply vector" $(\lambda_a)_{a\in\AA}$ and a ``score vector" $(s(a))_{a\in\AA}$ and we call the instance an \YES instance if and only if there is a partition $\PP^\pr=(\PP_i^\pr)_{i\in[\el]}$ such that the total number of votes for an alternative $a$ is the total number of votes for the alternative $a$ plus $\lambda_a$ ($\lambda_a$ could be negative also), the total number of voters left their district from \PP is at most the given budget \BB, and every alternative $a\in\AA$ wins in $s(a)$ districts in $\PP^\pr$. We have a Boolean dynamic programming table \TT indexed by the set of all tuples in $\{(i, (\lambda_a)_{a\in\AA}, b, (s(a))_{a\in\AA}): i\in\{0,\ldots,\el\}, -\BB\le\lambda_a\le\BB\;\forall a\in\AA,b\in\{0,\ldots,\BB\}, 0\le s(a)\le\el\;\forall a\in\AA\}$. We define $\TT(0,(\lambda_a)_{a\in\AA},b,(s(a))_{a\in\AA})$ to be \true if and only if $\lambda_a=0\;\forall a\in\AA, b\ge 0, s(a)=0\;\forall a\in\AA$. We update the entries of our dynamic programming table as follows. For a vector $(\mu_a)_{a\in\AA}\in\ZB^m$, we define $f^+((\mu_a)_{a\in\AA})=\sum_{a\in\AA: \mu_a>0} \mu_a$ and $f^-((\mu_a)_{a\in\AA})=\sum_{a\in\AA: \mu_a-0} -\mu_a$. Let $(\gamma_a^i)_{a\in\AA}$ be the number of votes that alternative $a$ receives in district $\PP_i$. Lastly, for a vector $(\nu_a)_{a\in\AA}\in\NB^m$ and an alternative $a\in\AA$, we define ${\mathbbm 1}_a((\nu_a)_{a\in\AA})$ to be $1$ if and only if $\nu_a = \max_{x\in\AA} \nu_x$. 
 \begin{align*}
 	&\TT(i,(\lambda_a)_{a\in\AA},b,(s(a))_{a\in\AA})
 	= \bigvee_{\substack{(\mu_a)_{a\in\AA}\in\ZB^m,
 			 	\mu_a\le\nu_a\;\forall a\in\AA\\
 			    f^+((\mu_a)_{a\in\AA})\le b,
 			    f^-((\mu_a)_{a\in\AA})\le b}}\\
 	&\TT\left(i-1,\left(\lambda_a+\mu_a\right)_{a\in\AA}, b-f^+\left(\left(\mu_a\right)_{a\in\AA}\right),\right.\\ &\left.\left(s(a)-{\mathbbm 1}_a\left(\left(\gamma_a^i+\mu_a\right)_{a\in\AA}\right)\right)\right)
 \end{align*}
 
 That is, while we update an entry of our dynamic programming table with $i$ districts, we guess over all possible ways district $i$ could change in a solution; it follows from the above update rule that we need to check $\OO(b^m)$ such guesses. The correctness of our algorithm is immediate from our dynamic programming formulation and update rule. We observe that our dynamic programming table has $\OO(\el(2\BB)^m\BB\el^m)$ entries and each entry can be updated in $\OO(b^m)$ time both of which are polynomial in input parameters if $m=\OO(1)$. Hence our algorithm runs in polynomial time if $m=\OO(1)$.
\end{proof}

\section{Conclusion}

We have initiated the study of minimum gerrymandering where the goal is to minimally change a given partition of voters into districts so that a favored alternative wins the election. Our results show that most of these problems are computationally intractable even under quite restrictive settings. These results show that district based election system is quite robust against against computationally bounded manipulators. We finally complement our hardness results with exhibiting polynomial time algorithms for some of our problems in some special cases. Since most of the problems studied in this paper is \NPC, it would be interesting to study approximation and parameterized algorithms for these problems. For example, many of our problems are polynomial time solvable if the budget or the number of districts are some constant. It would be interesting to study parameterized complexity of our problems with respect to these parameters.

\bibliography{references}
\longversion{\bibliographystyle{alpha}}
\shortversion{\bibliographystyle{aaai}}

\end{document}